\definecolor{subsectioncolor}{rgb}{0.067,0.627,0.859}
\newtheorem{proposition}{Proposition}
\newtheorem{lemma}{Lemma}
\newtheorem{theorem}{Theorem}
\definecolor{mygreen}{HTML}{00F78D}
\newtheorem{proof}{Proof}
\newtheorem{definition}{Definition}
\newtheorem{remark}{Remark}
\newtheorem{eg}{Example}
\begin{document}
\begin{frontmatter}

\title{Structural Sign Herdability in Temporally Switching Networks with Fixed Topology} 

\author[Indian Institute of Technology Kanpur]{Pradeep M,~Twinkle Tripathy}

\address[Indian Institute of Technology Kanpur]{Department of Electrical Engineering, 
Indian Institute of Technology Kanpur, Uttar Pradesh 208016, India.\\
Emails: \{pradeepm22, ttripathy\}@iitk.ac.in}


\begin{abstract}

This paper investigates structural sign ($\mathcal{SS}$) herdability in a special class of temporally switching networks with fixed topology. We show that when the topology of the underlying digraph remains unchanged across all snapshots, the network attains $\mathcal{SS}$ herdability even in the presence of signed or layer dilations, a condition not applicable to static networks. This reveals a fundamental structural advantage of temporal dynamics and highlights a novel mechanism through which switching can overcome the classical obstructions to herdability. To validate these conclusions, we utilize a more relaxed form of sign matching within each snapshot of the temporal network. Furthermore, we show that when all snapshots share the same underlying topology, the temporally switching network achieves $\mathcal{SS}$ herdability within just two snapshots, which is fewer than the number required for structural controllability. Several examples are included to demonstrate these results.

\end{abstract}

\begin{keyword}
Structural herdability, temporal networks, signed networks, linear time-varying systems, sign matching, layered graph.
\end{keyword}

\end{frontmatter}

\section{INTRODUCTION}
Network controllability is a fundamental topic in network science and multi-agent systems (MAS), where controllability refers to the ability of leader agents to drive follower agents to desired states. While controllability of static networks is well studied, real-world networks are often time-varying, making their analysis significantly more challenging. Over short time intervals, time-varying networks can be modeled as sequences of static snapshots, leading to switched-system representations in which switching can influence controllability. A notable subclass is temporally switching networks, where the switching order is fixed. Such networks naturally arise in social, communication, and biological systems, motivating extensive research on their controllability using controllability Grammians ~\citep{li2017fundamental,zhang2024reachability}.

The study of structural controllability, which focuses on the network topology rather than exact edge weights~\citep{lin1974structural}, becomes more challenging in time-varying networks. The controllability of temporal networks has been studied in ~\citep{li2017fundamental}. In~\citep{zhang2024reachability}, the authors investigate structural controllability and reachability of temporally evolving networks and derive graph-theoretic bounds for the reachable subspace.

In many engineering applications, complete controllability is not necessary, since the desired operating states are often restricted to the positive orthant. For instance, fluid levels in tank systems and luminescence values in lighting and optical communication systems are inherently positive. This motivates the study of herdability, which refers to the ability of a system to reach any point in the positive orthant of \(\mathbb{R}^n\). Herdability has been widely studied in areas such as finance, cognitive science, and social networks~ \citep{devenow1996rational,welch2000herding,wermers1999mutual,raafat2009herding}. However, most existing results focus on time-invariant networks~\citep{ruf2018herdable,ruf2019herdability,meng2020leader,de2023herdability}, with only a few works~\citep{shen2025herdability} addressing herdability in time-varying networks.

Although the structural controllability of time-varying systems has been widely explored in literature, research in the direction of structural herdability is still in its early stages. While the sign pattern of the network is the basis for herdability analysis, combining it with strong structural controllability and sign controllability \citep{ruf2019herdability},  in many cases, the edge weight also affects herdability \citep{pradeep}. In our previous work \citep{pradeep}, we studied the effect of edge weights on herdability. We established a necessary and sufficient condition for $\mathcal{SS}$ herdability of an arbitrary digraph with a layered wise unisigned graph as a basis. All these studies are in time-invariant networks, while in this paper, we study the $\mathcal{SS}$ herdability of a temporally switching time-varying graph with fixed topology throughout all the snapshots, adopting from \citep{lebon2024controllability}. The main contributions of this paper are summarized as follows:
\begin{itemize}
  \item  We introduce structural sign $(\mathcal{SS})$ herdability in a special type of temporally switching networks, where the sequence of switching and topology of the network is fixed, only the edge weights vary. We account for both edge weights and sign patterns of the underlying digraph, and study the conditions for  $\mathcal{SS}$ herdability of this particular class of temporal networks.

    \item While temporal switching can improve controllability (herdability), we show that switching between systems that share the same structure and edge weights does not provide any benefit in herdability. We also show that, for a system that is not $\mathcal{SS}$ herdable to become $(\mathcal{SS})$ herdable, the switch sequence must contain at least one snapshot with a different parametric realization.

    \item Using a relaxed version of the sign-matching introduced in~\citep{pradeep}, we map the temporal segmentation in each snapshot and show that for a temporal network with fixed topology, the $\mathcal{SS}$ herdability is achieved in two snapshots. 
    
\end{itemize}

The remainder of the paper is organized as follows. In Section II, we describe the notation and reviews on $\mathcal{SS}$ herdability notions. Section III states the problem. In Section IV, we analyze the $\mathcal{SS}$ herdability of the temporal with fixed topology. Section V concludes the paper with the future direction of the work.
\vspace{-1mm}
\section{NOTATION AND BACKGROUND} \vspace{-1mm}
\label{Sec:2}
\subsection{Notation and Matrix Theory}\label{Notation}\vspace{-1mm}
The sets of real and nonnegative numbers are denoted by \(\mathbb{R}\) and \(\mathbb{R}^+\), respectively. For a vector \(k\), \([k]_i\) denotes its \(i^{\text{th}}\) entry. A vector is called unisigned if all its nonzero entries have the same sign. A matrix \(\mathcal{A}\in\mathbb{R}^{n\times n}\) is said to be nonnegative (respectively positive) if \(a_{ij}\geq0\) (respectively \(a_{ij}>0\)) for all \(i,j\in[1,n]\). The notation \({a}_{ij}\) denotes the \((i,j)^{\text{th}}\) entry of \(\mathcal{A}\), while \(\mathcal{A}_{(:,j)}\) and \(\mathcal{A}_{(i,:)}\) denote the \(j^{\text{th}}\) column and \(i^{\text{th}}\) row of \(\mathcal{A}\), respectively. The image of \(\mathcal{A}\) is defined as \(\mathrm{Im}(\mathcal{A})=\{y \mid y=\mathcal{A}v\}.\)
\subsection{Graph Theory}\label{Graph theory}
\vspace{-1mm}
Let \(\mathcal{G}=(\mathcal{V},\mathcal{E},\mathcal{A})\) denote a weighted signed digraph, where \(\mathcal{V}\) is the set of nodes and \(\mathcal{E}\subseteq \mathcal{V}\times\mathcal{V}\) is the edge set. An edge \((i,j)\in\mathcal{E}\) represents a directed connection from node \(i\) to node \(j\). The condition \(a_{ij}\neq 0\) implies that \((j,i)\in\mathcal{E}\). A walk \(w_{(i,j)}^r\) in \(\mathcal{G}(\mathcal{A},\mathcal{B})\) is a sequence of directed edges connecting an initial node $i$ to a terminal node $j$. Let \(\mathcal{W}_{(i,j)}^r\) denote the product of the edge weights along the walk \(w_{(i,j)}^r\). A path is a walk in which no node is repeated.
\vspace{-1mm}
\subsection{Linear Temporally Switching systems}
\vspace{-2mm}
A temporal network consists of an ordered sequence of \(p\) network snapshots defined over the same set of \(n\) nodes. Each snapshot \(i\) is represented by a weighted adjacency matrix \(\mathcal{A}_i\) over the interval \(T_i=[t_i-t_{i-1})\), and evolves according to
\begin{equation}\label{sys1}
\dot{\mathbf{x}}(t)=\mathcal{A}_i\mathbf{x}(t)+\mathcal{B}_i\mathbf{u}_i(t),
\end{equation}
for \(t\in [t_i-t_{i-1})\). Here, \(\mathbf{x}(t)\in\mathbb{R}^n\) denotes the system state. The matrix \(\mathcal{B}_i\) specifies the driver nodes associated with the control input \(\mathbf{u}_i(t)\in\mathbb{R}^m\). We consider temporal herdability with a single leader node in all snapshots, i.e.,\( \mathcal{B}_i=\mathcal{B}=e_1 b_i,\) where \(e_1\) is the first standard basis vector. Assuming \(\mathbf{x}(0)=\mathbf{0}\), the reachable set of system~\eqref{sys1} over \(\{T_i\}_{i=1}^p\) is given by~\citep{xie2003controllability}
{\small
\[
\Omega_{\{T_i\}}
=
\langle \mathcal{A}_p \mid \mathcal{B}_p \rangle
+
\sum_{j=2}^{p}
\left(
\prod_{i=j}^{p} e^{\mathcal{A}_iT_i}
\right)
\langle \mathcal{A}_{j-1}\mid \mathcal{B}_{j-1}\rangle .
\]
}
Here, \(\langle \mathcal{A}_p \mid \mathcal{B}_p \rangle
=
\sum_{i=0}^{n-1}\mathcal{A}_p^i\mathcal{R}(\mathcal{B}_p)\) denotes the controllable subspace of snapshot \(p\).

\begin{remark}(\citep{zhang2024reachability}) The controllability matrix $\mathcal{C}$ of the network with dynamics \eqref{sys1} is given by:
{\small
\begin{equation}\label{C-temp}
\mathcal{C}_T(\mathcal{A}_i,\mathcal{B}) = 
\big[\, C_p, \; e^{A_p T_p} C_{p-1}, \; 
\dots, \;
e^{A_p T_p} \cdots e^{A_2 tT_2} C_1 \,\big].
\end{equation}}

Where $\mathcal{C}_k$ is the controllability matrix of the pair $(\mathcal{A}_k,\mathcal{B})$
\end{remark}
\begin{proposition} {\citep{pradeep}} \label{test for H} A system is said to be herdable if and only if there exists no $\bm{y}\geq0;~\bm{y}~\in~\mathbb{R}^n$ where n is the number of rows in $\mathcal{C}(\mathcal{A},\mathcal{B})$, such that $\mathcal{C}(\mathcal{A},\mathcal{B})^{\top} \bm{y}=0$, where $\mathcal{C}(\mathcal{A},\mathcal{B})$ is the controllability matrix associated with the pair $(\mathcal{A},\mathcal{B})$. In other words,if there is no nonnegative $\bm{y}$ such that $\bm{y}~\in~\textit{Null}~(\mathcal{C}(\mathcal{A},\mathcal{B})^{\top})$, then there exists a $v>0; v\in \mathbb{R}^n$ such that $v\in \mathcal{I}m(\mathcal{C}(\mathcal{A},\mathcal{B}))$ .
\end{proposition}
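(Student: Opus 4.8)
The plan is to recognise the statement as a herdability‑flavoured theorem of the alternative. First I would recall the classical fact that for the linear time‑invariant pair $(\mathcal{A},\mathcal{B})$ started at $\mathbf{x}(0)=\mathbf{0}$, the set of states reachable in any positive time equals $\mathcal{I}m(\mathcal{C}(\mathcal{A},\mathcal{B}))$; this follows from the Cayley--Hamilton theorem, which confines the columns of $e^{\mathcal{A}t}\mathcal{B}$ to $\mathrm{span}\{\mathcal{B},\mathcal{A}\mathcal{B},\dots,\mathcal{A}^{n-1}\mathcal{B}\}$, together with invertibility of the controllability Gramian on that span. By definition, the system is herdable exactly when this reachable subspace meets the interior of the positive orthant, i.e. when there is a coordinate vector $w$ with $\mathcal{C}(\mathcal{A},\mathcal{B})\,w>0$ entrywise, equivalently some $v>0$ with $v\in\mathcal{I}m(\mathcal{C}(\mathcal{A},\mathcal{B}))$.

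One implication is immediate: if such a $v=\mathcal{C}(\mathcal{A},\mathcal{B})\,w>0$ exists and $\bm{y}\ge 0$ satisfies $\mathcal{C}(\mathcal{A},\mathcal{B})^{\top}\bm{y}=0$, then $\bm{y}^{\top}v=\bm{y}^{\top}\mathcal{C}(\mathcal{A},\mathcal{B})\,w=(\mathcal{C}(\mathcal{A},\mathcal{B})^{\top}\bm{y})^{\top}w=0$; but $\bm{y}\ge 0$ and $v>0$ force $\bm{y}^{\top}v>0$ whenever $\bm{y}\ne\mathbf{0}$, so $\bm{y}=\mathbf{0}$, i.e. $\textit{Null}(\mathcal{C}(\mathcal{A},\mathcal{B})^{\top})$ contains no nonzero nonnegative vector. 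For the reverse implication I would invoke Gordan's theorem of the alternative applied to $M=\mathcal{C}(\mathcal{A},\mathcal{B})$: exactly one of the systems (i) $Mw>0$ for some $w$, and (ii) $M^{\top}\bm{y}=0$, $\bm{y}\ge 0$, $\bm{y}\ne\mathbf{0}$, is solvable. Hence, if (ii) has no solution---precisely the hypothesis that no nonzero nonnegative $\bm{y}$ lies in $\textit{Null}(\mathcal{C}(\mathcal{A},\mathcal{B})^{\top})$---then (i) is solvable, producing $v=Mw>0$ inside $\mathcal{I}m(\mathcal{C}(\mathcal{A},\mathcal{B}))$, so the network is herdable. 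The two implications together give the claimed equivalence, and the closing rephrasing is just the identity $\textit{Null}(\mathcal{C}(\mathcal{A},\mathcal{B})^{\top})=\mathcal{I}m(\mathcal{C}(\mathcal{A},\mathcal{B}))^{\perp}$.

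The crux is the reverse direction: manufacturing an honest strictly positive reachable state out of the mere absence of a sign‑definite left annihilator, which is exactly what a separation argument buys. Care is needed with the inequalities---the operative notion of herdability is reaching the \emph{open} positive orthant, which pairs with the strict inequality $Mw>0$ in Gordan's theorem; replacing it by the closed orthant would spoil the clean dichotomy and is the natural place to slip. If one prefers a self‑contained derivation to citing Gordan, the same conclusion follows by separating the subspace $\mathcal{I}m(\mathcal{C}(\mathcal{A},\mathcal{B}))$ from the open orthant by a hyperplane (or by Farkas' lemma), but the shape of the argument is unchanged.
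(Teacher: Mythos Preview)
The paper does not actually prove this proposition; it is quoted from the companion work \citep{pradeep2025} and used as a black box. Your argument is correct and is the natural one: reduce herdability to the existence of a strictly positive vector in $\mathcal{I}m(\mathcal{C})$, then apply Gordan's alternative to the matrix $M=\mathcal{C}(\mathcal{A},\mathcal{B})$. This is almost certainly the route taken in the cited reference as well, since Gordan (or an equivalent Farkas/separation argument) is the canonical tool for exactly this dichotomy.

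One minor remark: the proposition as printed contains a dimensional slip---it places $\bm{y}\in\mathbb{R}^q$ with $q$ the number of \emph{columns} of $\mathcal{C}$, but for $\mathcal{C}^{\top}\bm{y}=0$ to make sense and for $\bm{y}^{\top}v$ to be a scalar with $v\in\mathbb{R}^n$, one needs $\bm{y}\in\mathbb{R}^n$ (the number of \emph{rows}). Your proof implicitly uses the correct dimension, so this does not affect the argument, but it is worth flagging.
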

    
\subsection{Signed Layered Graph and its Associated Dilations}
\subsection{Signed layered graph}\vspace{-3mm}
A signed layered graph $\mathcal{G}_s$, is a tree-like version of the digraph $\mathcal{G}(A,B)$, which shows the distances from the root node to all other nodes, including the signs between edges. Let $V_{L_p}$ is the set of nodes in layer $L_p$, where $p$ is between 1 and $n-1$ then a signed layered graph will have the following characteristics:

\begin{itemize} \item In the signed layered graph $\mathcal{G}_s$, the nodes in the $k^{th}$ layer $L_k$ are called $V_{L_k}$. The first layer $V_{L_1}$ has only the leader node. 
\item $V_{L_k}$ has nodes that can be reached in $k-1$ steps from the leader. 
\item A node $v$ can show up more than once in $V_{L_{m+1}}$ if there are many paths of the same length $m$ to it. 
\item To avoid endless layers due to cycles, we limit $\mathcal{G}_s$ to $n$ layers. This matches Remark \ref{Remark1}, as only $n$ columns matter in the controllability matrix.
  \end{itemize}
  A layered graph includes all possible paths from the starting node to any other node in the digraph. If there are no edges between two layers, then the graph is disconnected. For a detailed discussion, the reader is referred to \cite{pradeep}.
  \begin{lemma} \label{SLG}
   Let $\mathcal{G}_T(\mathcal{A}_i,\mathcal{B})$ be the digraph associated with a linear temporally switching network with dynamics described by \eqref{sys1}. Let $\mathcal{G}_{s(i)}$ be the signed layered graph associated with $(\mathcal{A}_i,\mathcal{B})$ on the interval $[t_{i-1},t_i)$, then each layer $L_d$ corresponds to the column in the controllability matrix that is associated with the pair $(\mathcal{A}_i,\mathcal{B})$ on respective snapshot.
  \end{lemma}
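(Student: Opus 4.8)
The plan is to establish a walk-counting correspondence between the layers of $\mathcal{G}_{s(i)}$ and the columns $\Psi^i_{k}=\mathcal{A}_i^{k}\mathcal{B}$ of the single-snapshot controllability matrix $\mathcal{C}_i(\mathcal{A}_i,\mathcal{B})$ in \eqref{Control_mat}, and then lift it to the temporal controllability matrix $\mathcal{C}_T$ in \eqref{C-temp}. Since $\mathcal{C}_T$ is the horizontal concatenation of the blocks $e^{\mathcal{A}_p t_p}\cdots e^{\mathcal{A}_{i+1}t_{i+1}}\mathcal{C}_i$, and the exponential prefactors are common to an entire block and do not mix columns belonging to different snapshots, it suffices to prove the statement block-by-block, i.e.\ for the pair $(\mathcal{A}_i,\mathcal{B})$ on the interval $[t_{i-1},t_i)$.

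First I would fix the indexing convention dictated by the construction of $\mathcal{G}_s$: layer $L_1=V_{L_1}$ contains only the leader, $V_{L_d}$ consists of the nodes reachable from the leader in $d-1$ steps, and a node $v$ appears in $V_{L_d}$ once for each distinct directed walk of length $d-1$ from node $1$ to $v$ (with the truncation to $n$ layers matching, via the Cayley--Hamilton argument invoked in Remark~\ref{Remark1}, the $n$ relevant columns of $\mathcal{C}_i$). The claim to prove is then that $L_d$ corresponds to the column $\Psi^i_{d-1}=\mathcal{A}_i^{\,d-1}\mathcal{B}$. Using $\mathcal{B}=b_1 e_1$ we have
\[
[\Psi^{i}_{d-1}]_{j} \;=\; b_{1}\,[\mathcal{A}_i^{\,d-1}]_{j1},
\]
and the standard combinatorial identity for powers of a weighted adjacency matrix expresses $[\mathcal{A}_i^{\,d-1}]_{j1}$ as the sum, over all walks of length $d-1$ from node $1$ to node $j$, of the product of the edge weights along the walk (the quantities $\{\mathcal{W}^{r}_{(1,j)}\}$), with signs inherited from the sign pattern of $\mathcal{A}_i$.

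The core step is an induction on $d$. For $d=1$, $\Psi^i_0=\mathcal{B}$ is supported exactly on the leader, which is precisely $V_{L_1}$. For the inductive step, every copy of a node $v$ in $V_{L_d}$ is obtained by appending an outgoing edge $(u,v)$ to a copy of some $u$ in $V_{L_{d-1}}$; by the inductive hypothesis the copies of $u$ in $V_{L_{d-1}}$ are in bijection with walks of length $d-2$ from the leader to $u$, hence the copies of $v$ in $V_{L_d}$ are in bijection with walks of length $d-1$ from the leader to $v$. Comparing with the recursion $[\mathcal{A}_i^{\,d-1}]_{j1}=\sum_u [\mathcal{A}_i]_{ju}[\mathcal{A}_i^{\,d-2}]_{u1}$ shows this bijection is weight- and sign-preserving, so $[\Psi^i_{d-1}]_j$ equals $b_1$ times the signed sum of the walk-weights over the copies of $j$ in $V_{L_d}$. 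In particular the support of the $d$-th column of $\mathcal{C}_i$ is exactly the set of distinct nodes appearing in $L_d$, and the edge signs recorded along $\mathcal{G}_{s(i)}$ reproduce the sign contribution of each walk; combined over $i=1,\dots,p$ this gives the layer--column correspondence in the blocks of $\mathcal{C}_T$.

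The main obstacle I anticipate is bookkeeping the multiplicities correctly: because a node may reappear within a single layer, and cycles cause nodes to recur across layers, the statement is not merely about which entries are nonzero but about matching each copy of a node in $L_d$ to a distinct walk of length $d-1$, so the bijection in the inductive step must be set up with care and its weight/sign preservation verified explicitly. A secondary point worth a remark is that the correspondence is meant at the structural level (supports and signed walk-sums): generic cancellations may make a particular entry vanish even when the node appears in the layer, which is consistent with the relaxed sign-matching viewpoint used later and does not affect the layer-to-column identification asserted here.
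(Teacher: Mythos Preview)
The paper states this lemma without proof; it is treated as an immediate consequence of Remark~\ref{Remark1} (which already records that $[\Psi^i_k]_j\neq 0$ implies a walk of length $k$ from the leader to node $j$) together with the defining rules of the signed layered graph $\mathcal{G}_s$. Your proposal is a correct and carefully articulated version of exactly this standard walk-counting argument: the induction on $d$ together with the recursion $[\mathcal{A}_i^{\,d-1}]_{j1}=\sum_u[\mathcal{A}_i]_{ju}[\mathcal{A}_i^{\,d-2}]_{u1}$ is precisely what makes the informal ``layer $L_d$ corresponds to column $\Psi^i_{d-1}$'' statement rigorous, and your remark on multiplicities and possible cancellations is appropriate given how vaguely ``corresponds'' is phrased in the lemma.

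The only point worth trimming is the lift to $\mathcal{C}_T$: the lemma as stated only concerns the single-snapshot controllability matrix $\mathcal{C}_i(\mathcal{A}_i,\mathcal{B})$ (``the controllability matrix that is associated with the pair $(\mathcal{A}_i,\mathcal{B})$ on respective snapshot''), so the exponential-prefactor discussion, while correct, is not needed here and is developed separately in the paper (see \eqref{c-expansion} in the proof of the subsequent proposition).
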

 

\subsubsection{Signed Dilation.} \cite{ruf2018herdable}: \label{SD para} A digraph is said to have a signed dilation if it consists of a node whose outgoing edges have different signs, as shown in Fig.~ (\ref{fig: dilation}). 

\subsubsection{Layer Dilation.}\cite{pradeep}:\label{Lyr para} A signed layered graph $\mathcal{G}s$ is said to exhibit a layer dilation in the $k^{th}$ layer if the outgoing edges from $\mathcal{V}_{L_k}$ to $(k+1)^{th}$ layer have mixed signs. For instance, the signed network in Fig.~\ref{dila}(\subref{lyr dilation}) shows a layer dilation in the layer $L_k$.
\begin{figure}[ht]
  \centering
  \begin{subfigure}{0.237\textwidth}
    \centering

\tikzset{every picture/.style={scale=0.75pt}} 

\begin{tikzpicture}[x=0.75pt,y=0.75pt,yscale=-1,xscale=1]

\draw [color={rgb, 255:red, 0; green, 0; blue, 0 }  ,draw opacity=1 ]   (141,51.46) -- (119.68,82.97) ;
\draw [shift={(118,85.46)}, rotate = 304.08] [fill={rgb, 255:red, 0; green, 0; blue, 0 }  ,fill opacity=1 ][line width=0.08]  [draw opacity=0] (10.72,-5.15) -- (0,0) -- (10.72,5.15) -- (7.12,0) -- cycle    ;
\draw [color={rgb, 255:red, 0; green, 0; blue, 0 }  ,draw opacity=1 ]   (148.33,52.52) -- (162.8,85.71) ;
\draw [shift={(164,88.46)}, rotate = 246.44] [fill={rgb, 255:red, 0; green, 0; blue, 0 }  ,fill opacity=1 ][line width=0.08]  [draw opacity=0] (10.72,-5.15) -- (0,0) -- (10.72,5.15) -- (7.12,0) -- cycle    ;
\draw  [color={rgb, 255:red, 0; green, 0; blue, 0 }  ,draw opacity=1 ][fill={rgb, 255:red, 22; green, 125; blue, 223 }  ,fill opacity=1 ] (108.87,96.35) .. controls (108.87,93.9) and (111.01,91.91) .. (113.64,91.91) .. controls (116.28,91.91) and (118.42,93.9) .. (118.42,96.35) .. controls (118.42,98.8) and (116.28,100.79) .. (113.64,100.79) .. controls (111.01,100.79) and (108.87,98.8) .. (108.87,96.35) -- cycle ;
\draw  [color={rgb, 255:red, 0; green, 0; blue, 0 }  ,draw opacity=1 ][fill={rgb, 255:red, 22; green, 125; blue, 223 }  ,fill opacity=1 ] (164.98,98.93) .. controls (164.98,96.48) and (167.12,94.49) .. (169.76,94.49) .. controls (172.39,94.49) and (174.53,96.48) .. (174.53,98.93) .. controls (174.53,101.38) and (172.39,103.36) .. (169.76,103.36) .. controls (167.12,103.36) and (164.98,101.38) .. (164.98,98.93) -- cycle ;
\draw  [color={rgb, 255:red, 0; green, 0; blue, 0 }  ,draw opacity=1 ][fill={rgb, 255:red, 22; green, 125; blue, 223 }  ,fill opacity=1 ] (139.87,41.35) .. controls (139.87,38.9) and (142.01,36.91) .. (144.64,36.91) .. controls (147.28,36.91) and (149.42,38.9) .. (149.42,41.35) .. controls (149.42,43.8) and (147.28,45.79) .. (144.64,45.79) .. controls (142.01,45.79) and (139.87,43.8) .. (139.87,41.35) -- cycle ;

\draw (92.88,90.5) node [anchor=north west][inner sep=0.75pt]  [font=\small,color={rgb, 255:red, 0; green, 0; blue, 0 }  ,opacity=1 ,rotate=-359.39]  {$2$};
\draw (158.45,30.73) node [anchor=north west][inner sep=0.75pt]  [font=\small,color={rgb, 255:red, 0; green, 0; blue, 0 }  ,opacity=1 ,rotate=-358.27]  {$1$};
\draw (180.08,90.38) node [anchor=north west][inner sep=0.75pt]  [font=\small,color={rgb, 255:red, 0; green, 0; blue, 0 }  ,opacity=1 ,rotate=-359.39]  {$3$};
\draw (113.06,57.87) node [anchor=north west][inner sep=0.75pt]  [font=\scriptsize,color={rgb, 255:red, 0; green, 0; blue, 0 }  ,opacity=1 ,rotate=-358.88]  {$+$};
\draw (163.47,58.82) node [anchor=north west][inner sep=0.75pt]  [font=\scriptsize,color={rgb, 255:red, 0; green, 0; blue, 0 }  ,opacity=1 ]  {$-$};

\end{tikzpicture}
    \caption{Signed dilation}
    \label{fig: dilation}
  \end{subfigure}
  \begin{subfigure}{0.23\textwidth}
    \centering

\tikzset{every picture/.style={scale=0.75pt}} 
   
\begin{tikzpicture}[x=0.75pt,y=0.75pt,yscale=-1,xscale=1]

\draw  [color={rgb, 255:red, 0; green, 0; blue, 0 }  ,draw opacity=0 ][fill={rgb, 255:red, 74; green, 144; blue, 226 }  ,fill opacity=0.1 ] (277.51,28.67) -- (439.93,28.67) -- (439.93,49.13) -- (277.51,49.13) -- cycle ;
\draw  [color={rgb, 255:red, 0; green, 0; blue, 0 }  ,draw opacity=0 ][fill={rgb, 255:red, 74; green, 144; blue, 226 }  ,fill opacity=0.1 ] (277.84,90.44) -- (440.26,90.44) -- (440.26,110.89) -- (277.84,110.89) -- cycle ;
\draw  [color={rgb, 255:red, 0; green, 0; blue, 0 }  ,draw opacity=1 ][fill={rgb, 255:red, 22; green, 125; blue, 223 }  ,fill opacity=1 ] (324.87,38.35) .. controls (324.87,35.9) and (327.01,33.91) .. (329.64,33.91) .. controls (332.28,33.91) and (334.42,35.9) .. (334.42,38.35) .. controls (334.42,40.8) and (332.28,42.79) .. (329.64,42.79) .. controls (327.01,42.79) and (324.87,40.8) .. (324.87,38.35) -- cycle ;
\draw  [color={rgb, 255:red, 0; green, 0; blue, 0 }  ,draw opacity=1 ][fill={rgb, 255:red, 22; green, 125; blue, 223 }  ,fill opacity=1 ] (325.29,98.85) .. controls (325.29,96.4) and (327.43,94.41) .. (330.06,94.41) .. controls (332.7,94.41) and (334.84,96.4) .. (334.84,98.85) .. controls (334.84,101.3) and (332.7,103.29) .. (330.06,103.29) .. controls (327.43,103.29) and (325.29,101.3) .. (325.29,98.85) -- cycle ;
\draw  [color={rgb, 255:red, 0; green, 0; blue, 0 }  ,draw opacity=1 ][fill={rgb, 255:red, 22; green, 125; blue, 223 }  ,fill opacity=1 ] (378.03,99.48) .. controls (378.03,97.03) and (380.17,95.05) .. (382.8,95.05) .. controls (385.44,95.05) and (387.57,97.03) .. (387.57,99.48) .. controls (387.57,101.94) and (385.44,103.92) .. (382.8,103.92) .. controls (380.17,103.92) and (378.03,101.94) .. (378.03,99.48) -- cycle ;
\draw  [color={rgb, 255:red, 0; green, 0; blue, 0 }  ,draw opacity=1 ][fill={rgb, 255:red, 22; green, 125; blue, 223 }  ,fill opacity=1 ] (376.98,38.93) .. controls (376.98,36.48) and (379.12,34.49) .. (381.76,34.49) .. controls (384.39,34.49) and (386.53,36.48) .. (386.53,38.93) .. controls (386.53,41.38) and (384.39,43.36) .. (381.76,43.36) .. controls (379.12,43.36) and (376.98,41.38) .. (376.98,38.93) -- cycle ;
\draw [color={rgb, 255:red, 208; green, 2; blue, 27 }  ,draw opacity=1 ]   (330.01,50.68) -- (330,85.46) ;
\draw [shift={(330,88.46)}, rotate = 270.02] [fill={rgb, 255:red, 208; green, 2; blue, 27 }  ,fill opacity=1 ][line width=0.08]  [draw opacity=0] (10.72,-5.15) -- (0,0) -- (10.72,5.15) -- (7.12,0) -- cycle    ;
\draw [color={rgb, 255:red, 208; green, 2; blue, 27 }  ,draw opacity=1 ][fill={rgb, 255:red, 183; green, 176; blue, 176 }  ,fill opacity=1 ]   (383,50.46) -- (382.82,89.05) ;
\draw [shift={(382.8,92.05)}, rotate = 270.27] [fill={rgb, 255:red, 208; green, 2; blue, 27 }  ,fill opacity=1 ][line width=0.08]  [draw opacity=0] (10.72,-5.15) -- (0,0) -- (10.72,5.15) -- (7.12,0) -- cycle    ;
\draw  [fill={rgb, 255:red, 189; green, 16; blue, 224 }  ,fill opacity=0.18 ] (285,59.98) .. controls (285,58.25) and (286.4,56.86) .. (288.12,56.86) -- (417.88,56.86) .. controls (419.6,56.86) and (421,58.25) .. (421,59.98) -- (421,69.34) .. controls (421,71.06) and (419.6,72.46) .. (417.88,72.46) -- (288.12,72.46) .. controls (286.4,72.46) and (285,71.06) .. (285,69.34) -- cycle ;

\draw (395.8,57.84) node [anchor=north west][inner sep=0.75pt]    {$-$};
\draw (308.08,56.83) node [anchor=north west][inner sep=0.75pt]    {$+$};
\draw (308.88,31.5) node [anchor=north west][inner sep=0.75pt]  [font=\small,color={rgb, 255:red, 0; green, 0; blue, 0 }  ,opacity=1 ,rotate=-359.39]  {$2$};
\draw (392.08,32.38) node [anchor=north west][inner sep=0.75pt]  [font=\small,color={rgb, 255:red, 0; green, 0; blue, 0 }  ,opacity=1 ,rotate=-359.39]  {$3$};
\draw (306.08,92.38) node [anchor=north west][inner sep=0.75pt]  [font=\small,color={rgb, 255:red, 0; green, 0; blue, 0 }  ,opacity=1 ,rotate=-359.39]  {$4$};
\draw (395.88,90.5) node [anchor=north west][inner sep=0.75pt]  [font=\small,color={rgb, 255:red, 0; green, 0; blue, 0 }  ,opacity=1 ,rotate=-359.39]  {$5$};
\draw (460,30.4) node [anchor=north west][inner sep=0.75pt]    {$L_{k}$};
\draw (460,90.4) node [anchor=north west][inner sep=0.75pt]    {$L_{k+1}$};

\end{tikzpicture}
    \caption{Layer dilation}
    \label{lyr dilation}
  \end{subfigure}

  \caption{Types of dilations}
  \label{dila}
\end{figure}

\vspace{-2mm}
\section{MOTIVATION AND PROBLEM STATEMENT}\label{Sec:3} 
\vspace{-1mm}
We investigate the $\mathcal{SS}$ herdability of digraphs that are not $\mathcal{SS}$ herdable in the static counterpart. In particular, tree graphs exhibiting signed dilation or layer dilation are not $\mathcal{SS}$ herdable digraphs under static conditions. To address this, we examine their $\mathcal{SS}$ herdability in the temporal setting by concatenating system matrices $\mathcal{A}_1, \mathcal{A}_2, \dots, \mathcal{A}_p$, where each $\mathcal{A}_i$ shares the same underlying topology but may take different parameter realizations within the structural framework.

To motivate our analysis, consider the digraph shown in Fig.~\ref{main di}(\subref{Temp_H}). Its static counterpart is not $\mathcal{SS}$ herdable (not herdable for any parametric realization) due to the presence of a signed dilation. It is known that the same digraph is not structurally controllable. However, the temporal version of this digraph achieves full generic rank when expanded over future snapshots with different parametric realizations. Temporal networks often enjoy a fundamental advantage over static networks, as they can achieve controllability with significantly lower control energy~\citep{li2017fundamental}. Interestingly, we show that the number of snapshots required for the digraph to become $\mathcal{SS}$ herdable is smaller than the number required to achieve full structural controllability. That is, even before the temporal network becomes fully controllable, it may already satisfy the $\mathcal{SS}$ herdability condition.

\begin{figure}[ht]
   \centering
\begin{subfigure}{0.20\textwidth}
\centering
\tikzset{every picture/.style={scale=1.32pt}} 
\begin{tikzpicture}[x=0.75pt,y=0.75pt,yscale=-1,xscale=1]

\draw    (61.69,61.31) -- (87.27,79.87) ;
\draw [shift={(75.61,71.41)}, rotate = 215.96] [fill={rgb, 255:red, 0; green, 0; blue, 0 }  ][line width=0.08]  [draw opacity=0] (5.36,-2.57) -- (0,0) -- (5.36,2.57) -- (3.56,0) -- cycle    ;
\draw  [fill={rgb, 255:red, 6; green, 80; blue, 250 }  ,fill opacity=1 ] (84.64,79.87) .. controls (84.64,78.43) and (85.82,77.25) .. (87.27,77.25) .. controls (88.72,77.25) and (89.9,78.43) .. (89.9,79.87) .. controls (89.9,81.32) and (88.72,82.49) .. (87.27,82.49) .. controls (85.82,82.49) and (84.64,81.32) .. (84.64,79.87) -- cycle ;
\draw [color={rgb, 255:red, 161; green, 27; blue, 43 }  ,draw opacity=1 ]   (61.62,37.49) .. controls (63.28,39.16) and (63.28,40.83) .. (61.61,42.49) -- (61.61,43.97) -- (61.59,51.97) ;
\draw [shift={(61.59,54.97)}, rotate = 270.1] [fill={rgb, 255:red, 161; green, 27; blue, 43 }  ,fill opacity=1 ][line width=0.08]  [draw opacity=0] (7.14,-3.43) -- (0,0) -- (7.14,3.43) -- (4.74,0) -- cycle    ;
\draw  [fill={rgb, 255:red, 239; green, 43; blue, 67 }  ,fill opacity=1 ] (58.99,34.87) .. controls (58.99,33.43) and (60.17,32.26) .. (61.62,32.26) .. controls (63.07,32.26) and (64.25,33.43) .. (64.25,34.87) .. controls (64.25,36.32) and (63.07,37.49) .. (61.62,37.49) .. controls (60.17,37.49) and (58.99,36.32) .. (58.99,34.87) -- cycle ;
\draw    (61.69,61.31) -- (35.8,79.05) ;
\draw [shift={(47.59,70.97)}, rotate = 325.59] [fill={rgb, 255:red, 0; green, 0; blue, 0 }  ][line width=0.08]  [draw opacity=0] (5.36,-2.57) -- (0,0) -- (5.36,2.57) -- (3.56,0) -- cycle    ;
\draw    (61.71,61.48) -- (62.03,85.26) ;
\draw [shift={(61.89,74.77)}, rotate = 269.22] [fill={rgb, 255:red, 0; green, 0; blue, 0 }  ][line width=0.08]  [draw opacity=0] (5.36,-2.57) -- (0,0) -- (5.36,2.57) -- (3.56,0) -- cycle    ;
\draw  [color={rgb, 255:red, 64; green, 95; blue, 17 }  ,draw opacity=1 ][fill={rgb, 255:red, 85; green, 211; blue, 33 }  ,fill opacity=1 ] (61.59,55.78) .. controls (63.06,55.75) and (64.28,57) .. (64.31,58.57) .. controls (64.35,60.14) and (63.18,61.44) .. (61.71,61.48) .. controls (60.23,61.51) and (59.01,60.26) .. (58.98,58.68) .. controls (58.95,57.11) and (60.12,55.81) .. (61.59,55.78) -- cycle ;
\draw  [fill={rgb, 255:red, 6; green, 80; blue, 250 }  ,fill opacity=1 ] (33.17,79.05) .. controls (33.17,77.6) and (34.34,76.43) .. (35.8,76.43) .. controls (37.25,76.43) and (38.43,77.6) .. (38.43,79.05) .. controls (38.43,80.49) and (37.25,81.66) .. (35.8,81.66) .. controls (34.34,81.66) and (33.17,80.49) .. (33.17,79.05) -- cycle ;
\draw  [fill={rgb, 255:red, 6; green, 80; blue, 250 }  ,fill opacity=1 ] (59.4,85.26) .. controls (59.4,83.81) and (60.57,82.64) .. (62.03,82.64) .. controls (63.48,82.64) and (64.66,83.81) .. (64.66,85.26) .. controls (64.66,86.7) and (63.48,87.88) .. (62.03,87.88) .. controls (60.57,87.88) and (59.4,86.7) .. (59.4,85.26) -- cycle ;

\draw (66.74,51.77) node [anchor=north west][inner sep=0.75pt]  [font=\tiny]  {$1$};
\draw (24.47,73.24) node [anchor=north west][inner sep=0.75pt]  [font=\tiny]  {$2$};
\draw (65.52,82.93) node [anchor=north west][inner sep=0.75pt]  [font=\tiny]  {$3$};
\draw (90.7,76.36) node [anchor=north west][inner sep=0.75pt]  [font=\tiny]  {$4$};
\draw (57.85,23.27) node [anchor=north west][inner sep=0.75pt]  [font=\tiny]  {$u(t)$};
\draw (65.72,40.3) node [anchor=north west][inner sep=0.75pt]  [font=\tiny,color={rgb, 255:red, 128; green, 128; blue, 128 }  ,opacity=1 ]  {$1$};
\draw (42.05,59.97) node [anchor=north west][inner sep=0.75pt]  [font=\tiny,color={rgb, 255:red, 128; green, 128; blue, 128 }  ,opacity=1 ]  {$1$};
\draw (80.07,65.33) node [anchor=north west][inner sep=0.75pt]  [font=\tiny,color={rgb, 255:red, 128; green, 128; blue, 128 }  ,opacity=1 ]  {$-1$};
\draw (49.07,75) node [anchor=north west][inner sep=0.75pt]  [font=\tiny,color={rgb, 255:red, 128; green, 128; blue, 128 }  ,opacity=1 ]  {$-1$};

\end{tikzpicture}
\caption{A digraph $\mathcal{G}_1(\mathcal{A,B})$}
\label{temp_1}

\end{subfigure}
\hspace{2mm}
\begin{subfigure}{0.2\textwidth}
\centering   
\tikzset{every picture/.style={scale=0.95pt}} 

\begin{tikzpicture}[x=0.75pt,y=0.75pt,yscale=-1,xscale=1]

\draw  [fill={rgb, 255:red, 6; green, 80; blue, 250 }  ,fill opacity=1 ] (241.28,330.03) .. controls (241.28,328.58) and (242.46,327.41) .. (243.91,327.41) .. controls (245.37,327.41) and (246.55,328.58) .. (246.55,330.03) .. controls (246.55,331.47) and (245.37,332.64) .. (243.91,332.64) .. controls (242.46,332.64) and (241.28,331.47) .. (241.28,330.03) -- cycle ;
\draw [color={rgb, 255:red, 161; green, 27; blue, 43 }  ,draw opacity=1 ]   (222.62,278.35) .. controls (224.28,280.02) and (224.28,281.68) .. (222.61,283.35) -- (222.61,284.83) -- (222.59,292.83) ;
\draw [shift={(222.59,295.83)}, rotate = 270.1] [fill={rgb, 255:red, 161; green, 27; blue, 43 }  ,fill opacity=1 ][line width=0.08]  [draw opacity=0] (7.14,-3.43) -- (0,0) -- (7.14,3.43) -- (4.74,0) -- cycle    ;
\draw  [fill={rgb, 255:red, 239; green, 43; blue, 67 }  ,fill opacity=1 ] (219.99,275.73) .. controls (219.99,274.29) and (221.17,273.11) .. (222.62,273.11) .. controls (224.07,273.11) and (225.25,274.29) .. (225.25,275.73) .. controls (225.25,277.18) and (224.07,278.35) .. (222.62,278.35) .. controls (221.17,278.35) and (219.99,277.18) .. (219.99,275.73) -- cycle ;
\draw  [color={rgb, 255:red, 64; green, 95; blue, 17 }  ,draw opacity=1 ][fill={rgb, 255:red, 85; green, 211; blue, 33 }  ,fill opacity=1 ] (222.59,296.63) .. controls (224.06,296.6) and (225.28,297.85) .. (225.31,299.43) .. controls (225.35,301) and (224.18,302.3) .. (222.71,302.33) .. controls (221.23,302.36) and (220.01,301.11) .. (219.98,299.54) .. controls (219.95,297.97) and (221.12,296.67) .. (222.59,296.63) -- cycle ;
\draw  [fill={rgb, 255:red, 6; green, 80; blue, 250 }  ,fill opacity=1 ] (202.4,330.66) .. controls (202.4,329.21) and (203.58,328.04) .. (205.03,328.04) .. controls (206.49,328.04) and (207.67,329.21) .. (207.67,330.66) .. controls (207.67,332.1) and (206.49,333.28) .. (205.03,333.28) .. controls (203.58,333.28) and (202.4,332.1) .. (202.4,330.66) -- cycle ;
\draw  [fill={rgb, 255:red, 6; green, 80; blue, 250 }  ,fill opacity=1 ] (182.37,358.62) .. controls (182.37,357.17) and (183.55,356) .. (185,356) .. controls (186.45,356) and (187.63,357.17) .. (187.63,358.62) .. controls (187.63,360.06) and (186.45,361.23) .. (185,361.23) .. controls (183.55,361.23) and (182.37,360.06) .. (182.37,358.62) -- cycle ;
\draw    (203.47,333.02) -- (221.38,357.37) ;
\draw [shift={(213.96,347.29)}, rotate = 233.66] [fill={rgb, 255:red, 0; green, 0; blue, 0 }  ][line width=0.08]  [draw opacity=0] (7.14,-3.43) -- (0,0) -- (7.14,3.43) -- (4.74,0) -- cycle    ;
\draw    (203.47,333.02) -- (186,356) ;
\draw [shift={(193.16,346.58)}, rotate = 307.24] [fill={rgb, 255:red, 0; green, 0; blue, 0 }  ][line width=0.08]  [draw opacity=0] (7.14,-3.43) -- (0,0) -- (7.14,3.43) -- (4.74,0) -- cycle    ;
\draw    (244.91,332.64) -- (264.52,357.66) ;
\draw [shift={(256.32,347.2)}, rotate = 231.91] [fill={rgb, 255:red, 0; green, 0; blue, 0 }  ][line width=0.08]  [draw opacity=0] (7.14,-3.43) -- (0,0) -- (7.14,3.43) -- (4.74,0) -- cycle    ;
\draw    (222.51,302.84) -- (242.81,327.94) ;
\draw [shift={(234.3,317.42)}, rotate = 231.04] [fill={rgb, 255:red, 0; green, 0; blue, 0 }  ][line width=0.08]  [draw opacity=0] (7.14,-3.43) -- (0,0) -- (7.14,3.43) -- (4.74,0) -- cycle    ;
\draw    (222.51,302.84) -- (205.67,327.66) ;
\draw [shift={(212.63,317.4)}, rotate = 304.17] [fill={rgb, 255:red, 0; green, 0; blue, 0 }  ][line width=0.08]  [draw opacity=0] (7.14,-3.43) -- (0,0) -- (7.14,3.43) -- (4.74,0) -- cycle    ;
\draw  [fill={rgb, 255:red, 6; green, 80; blue, 250 }  ,fill opacity=1 ] (220.75,358.99) .. controls (220.75,357.55) and (221.93,356.37) .. (223.38,356.37) .. controls (224.83,356.37) and (226.01,357.55) .. (226.01,358.99) .. controls (226.01,360.44) and (224.83,361.61) .. (223.38,361.61) .. controls (221.93,361.61) and (220.75,360.44) .. (220.75,358.99) -- cycle ;
\draw  [fill={rgb, 255:red, 6; green, 80; blue, 250 }  ,fill opacity=1 ] (262.89,360.28) .. controls (262.89,358.83) and (264.07,357.66) .. (265.52,357.66) .. controls (266.98,357.66) and (268.16,358.83) .. (268.16,360.28) .. controls (268.16,361.72) and (266.98,362.89) .. (265.52,362.89) .. controls (264.07,362.89) and (262.89,361.72) .. (262.89,360.28) -- cycle ;

\draw (227.74,292.63) node [anchor=north west][inner sep=0.75pt]  [font=\tiny]  {$1$};
\draw (192.14,326.09) node [anchor=north west][inner sep=0.75pt]  [font=\tiny]  {$2$};
\draw (252.52,325.79) node [anchor=north west][inner sep=0.75pt]  [font=\tiny]  {$3$};
\draw (170.36,356.55) node [anchor=north west][inner sep=0.75pt]  [font=\tiny]  {$4$};
\draw (218.85,260.79) node [anchor=north west][inner sep=0.75pt]  [font=\tiny]  {$u( t)$};
\draw (180.3,336.16) node [anchor=north west][inner sep=0.75pt]  [font=\scriptsize]  {$+$};
\draw (215.96,334.82) node [anchor=north west][inner sep=0.75pt]  [font=\scriptsize]  {$-$};
\draw (232.9,357.15) node [anchor=north west][inner sep=0.75pt]  [font=\tiny]  {$5$};
\draw (275.9,356.91) node [anchor=north west][inner sep=0.75pt]  [font=\tiny]  {$6$};
\draw (209.81,279.45) node [anchor=north west][inner sep=0.75pt]  [font=\scriptsize]  {$+$};
\draw (202.39,306.04) node [anchor=north west][inner sep=0.75pt]  [font=\scriptsize]  {$+$};
\draw (240.43,307.73) node [anchor=north west][inner sep=0.75pt]  [font=\scriptsize]  {$-$};
\draw (261.05,334.04) node [anchor=north west][inner sep=0.75pt]  [font=\scriptsize]  {$+$};
\end{tikzpicture}
    \caption{A digraph $\mathcal{G}_2(\mathcal{A,B}) $}
    \label{Temp_H}
\end{subfigure}
\caption{Temporally switching digraphs that are not $\mathcal{SS}$ herdable in static version}
\label{main di}    
\end{figure}
 Th controllability matrix $\mathcal{C}(\mathcal{A,B})$ associated with the digraph given in Fig.\ref{main di}(\subref{temp_1}) is given below
{\small
\[
 \mathcal{C}(\mathcal{A,B})=\begin{bmatrix}
0 & 0 & 0 & 0 \\
0 & 1 & 0 & 0 \\
0& -1 & 0 & 0 \\
0 & -1 & 0 & 0
\end{bmatrix};
\quad
\mathcal{C}_T(\mathcal{A}_i,\mathcal{B}) =
\begin{bmatrix}
1 & 0 & 0 & 0 &1 & 0 & 0 & 0\\
0 & 1 & 0 & 0 &T_2 & 1 & 0 & 0\\
0 & -1 & 0 & 0 &-T_2 & -1 & 0 & 0\\
0 & -1& 0 & 0 &-T_2 & -1 & 0 & 0\\
\end{bmatrix}
\] }
The digraph is not \(\mathcal{SS}\) herdable for any parametric realization of \((\mathcal{A},\mathcal{B})\). To further examine this, we consider its temporal version with two snapshots over \([t_0,t_2)\), both having identical structure and edge weights. The corresponding controllability matrix \(\mathcal{C}_T(\mathcal{A}_i,\mathcal{B})\) is shown for \(t\in[t_0,t_2)\).

From \(\mathcal{C}_T\), the rows corresponding to nodes \(\{2,3,4\}\) are linearly dependent due to the presence of a signed dilation. Hence, adding more temporal snapshots with the same structure and edge weights cannot make the system fully herdable. However, the same network can be \(\mathcal{SS}\) herdable under a different parametric realization. We first introduce the following definitions.

\begin{definition}
A temporally switching network specified by the pairs $(\mathcal{A}_i,\mathcal{B})$, $i \in \{1,2,\ldots,p\}$, is said to be {structurally similar} to another network described by $(\bar{\mathcal{A}}_i,\bar{\mathcal{B}})$ if the following conditions are satisfied:
\begin{enumerate}
    \item For each snapshot $i$, the matrix $\bar{\mathcal{A}}_i$ shares the same zero-nonzero structure and sign pattern as $\mathcal{A}_i$.
    \item Both networks consist of the same number of snapshots.
\end{enumerate}
\end{definition}

\begin{definition}
Consider a temporally switching network governed by the dynamics~\eqref{sys1}. The state $x_i$ of node $i$ is said to be herdable on the interval $[t_0,t_f)$ if, for the given temporal sequence, and for every initial condition $x(t_0)\in\mathbb{R}^n$ and every threshold $h>0$, there exists a piecewise continuous input $u:\,[t_0,t_f)\to\mathbb{R}^m$ such that the solution satisfies \(x_i(t_f)\ge h\).
\end{definition}

The temporally switching network described by~\eqref{sys1} is said to be completely herdable on $[t_0,t_f)$ if, for every initial condition $x(t_0)\in\mathbb{R}^n$ and every $h>0$, there exists a piecewise continuous input $u:\,[t_0,t_f)\to\mathbb{R}^m$ such that \( x_i(t_f)\ge h\quad \text{for all nodes } i=\{1,\dots,n,\)\} i.e., every node is simultaneously herdable on $[t_0,t_f)$.

\begin{definition}
A temporally switching digraph \(\mathcal{G}_T(\mathcal{A}_k,\mathcal{B})\), where \(k \in \{1,\dots,p\}\) and \(p\) denotes the number of snapshots, is said to be structurally sign (\(\mathcal{SS}\) herdable) if there exists a temporally switching network \(\mathcal{G}_T(\mathcal{\bar{A}}_k,\mathcal{\bar{B}})\) consistent with the sign pattern and topology prescribed by \((\mathcal{A}_k,\mathcal{B})\) such that the corresponding temporal switching system is completely herdable.
\end{definition}

Consider a multi-agent network modelled by a directed weighted signed graph $\mathcal{G}(\mathcal{A},\mathcal{B})$. Let \(\mathcal{V}=\{1,\dots,n\}\) is the set of agents, \(\mathcal{V}_L=\{1\}\subseteq \mathcal{V}\) is the leader node, and \(\mathcal{V}_f=V\setminus \{1\}\) is the set of followers. \(x_i(t)\in\mathbb{R}\) denotes the state of agent \(i\), and \(u(t)\) represents the control input applied to the designated leader (node \(1\)). Let $\mathcal{G}_T(\mathcal{A}_k,\mathcal{B})$ be the temporally switching digraph with \(p\) snapshots, i.e, $k=\{1,2,\cdots p\}$. We assume that the sign pattern and topology of the digraph $\mathcal{G}_T(\mathcal{A}_k,\mathcal{B})$ remain unchanged across the snapshots, but the magnitude of edge weights can vary without changing the sign pattern, implying that \(a_{ij}\in\mathbb{R}^+\). 

Furthermore, the network is assumed to be input–connected throughout the snapshot, meaning that node~1 receives the external input and all the followers are influenced by the leader in every snapshot. In addition, node~1 evolves independently of the follower nodes, i.e., its state is unaffected by their dynamics.
 Consequently, the input matrix is identical across all snapshots and is given by  
\(
\mathcal{B}_i = \mathcal{B} = \begin{bmatrix} b_1 & 0 & 0 & 0 & \cdots \end{bmatrix}^\top \in \mathbb{R}^n,~ \forall\, i \in \{1,2,\dots,m\}
\).

\begin{remark} \label{Remark1}
    Consider a pair \( (\mathcal{A}_i, \mathcal{B}) \), where \( \mathcal{A}_i \in \mathbb{R}^{n \times n} \) and \( \mathcal{B} \in \mathbb{R}^n \) , $i\in \{1,\dots,p\}$. Assume that node 1 is the leader for all $t\in [t_{i-1},t_i)$. Then, the controllability matrix of the pair \( (\mathcal{A}_i, \mathcal{B}_i) \) is given by
    \begin{equation} \label{Control_mat}
\mathcal{C}_i(\mathcal{A}_i, \mathcal{B}) = \begin{bmatrix}
~~\mathcal{B} ~|~\Psi^i_1~|~\Psi^i_2~|....~|\Psi^i_{n-1}
\end{bmatrix}
\end{equation}

where  \( [\Psi^i_k] =[\mathcal{A}_i^k\mathcal{B}]\), $k\in[0,n-1]$. Here, $b_ 1$ is the strength of the input signal received by the leader node. Without loss of generality, the sign of $b_1$ is assumed to be positive for the remainder of this study. If the $j^{th}$ entry of $[\Psi^i_k]$, $[\Psi_k^i]_j$ is not equal to $0$, then there exists at least one path from the leader to node $j$ of length $k$ at $t\in [t_{i-1},t_i)$.

\end{remark}

\begin{theorem}\label{image of C-tmp}
    
A linear temporally switching system \eqref{sys1} is completely herdable on time interval $[t_0,t_m)$ if and only if there exists a $v>0$ such that the $v$ belongs to the range space of the controllability matrix $v\in \mathcal{I}m(\mathcal{C}_T(\mathcal{A}_i,\mathcal{B}))$ given in \eqref{C-temp}.
\end{theorem}    
\begin{proof}
   Sufficiency: We prove the claim by contradiction. Suppose that the controllability matrix  $\mathcal{C}_T$ in~\eqref{C-temp} does not have a positive image, yet the temporally switching system is herdable. Since $\mathcal{C}_T$ does not admit a positive image, there must exist two rows $\mathcal{C}_T(i,:)$ and $\mathcal{C}_T(j,:)$, corresponding to nodes $i$ and $j$, that are non-negatively linearly dependent. Hence, the system cannot generate a strictly positive state in both coordinates. In particular, we have \( \mathcal{C}_T(i,:) = -\beta\, \mathcal{C}_T(j,:)\) for some \(\beta > 0\). This implies the existence of a nonnegative vector $y \in \mathbb{R}^{n}$, such that \( \mathcal{C}_T^\top y = 0\). By Proposition~\ref{test for H}, this condition implies that the system is not herdable, which contradicts our assumption.

Necessity: Suppose that, for the given temporal sequence, the system \eqref{sys1} is herdable. Then, for any initial condition, there exists an input $u_i(t)$ defined on each interval $[t_{i-1},t_i)$ such that every state $x_i$ can be driven across a positive threshold $h>0$. The existence of a suitable input $u(t)$ for every state such that $x_i(t_f)\geq h>0$. This is true for all the state $x_i; i \in \{1,2,\dots n \}$. For a time invariant case this equivalent to the existence of a vector $x$ (constructed from $u(t)$) such that \(\mathcal{C}_T x = v > 0,\) that is, there exists a strictly positive vector $\,v \in \mathcal{I}m(\mathcal{C}_T(\mathcal{A}_i,\mathcal{B}))$. \hfill \qed
\end{proof}

The controllability matrix analysis for \(\mathcal{SS}\) herdability in temporally switching networks becomes considerably more complex than in the static case, particularly for large graphs. For illustration, consider the digraph in Fig.~\ref{main di}(\subref{Temp_H}), whose controllability matrix for two snapshots is given below.
\[\mathcal{C}_{T} (\mathcal{A}_i,\mathcal{B}) =
\left[
\begin{smallmatrix}
1 & 0 & 0 & 0 & 0 & 0 & 1 & 0 & 0 & 0 & 0 & 0 \\
0 & d_{21} & 0 & 0 & 0 & 0 & \sigma_1 & a_{21} & 0 & 0 & 0 & 0 \\
0 & d_{31} & 0 & 0 & 0 & 0 & \sigma_2 & a_{31} & 0 & 0 & 0 & 0 \\
0 & 0 & d_{21} d_{42} & 0 & 0 & 0 & \sigma_3 & \sigma_4 & a_{21} a_{42} & 0 & 0 & 0 \\
0 & 0 & d_{21} d_{52} & 0 & 0 & 0 & \sigma_5 & \sigma_6 &a_{21} a_{52} & 0 & 0 & 0 \\
0 & 0 & -d_{31} d_{63} & 0 & 0 & 0 & -\sigma_7 & -\sigma_8 & -a_{31} a_{63} & 0 & 0 & 0
\end{smallmatrix}
\right]
\]
The auxiliary coefficients $\sigma_i$ in the controllability matrix are defined as:
\[
\begin{aligned}
\sigma_1 &= d_{21} T_2, 
& \sigma_2 &= d_{31} T_2, 
& \sigma_3 &= \frac{1}{2} d_{21} d_{42} T_2^2, \\[1mm]
\sigma_4 &= a_{21} d_{42} T_2, 
& \sigma_5 &= \frac{1}{2} d_{21} d_{52} T_2^2, 
& \sigma_6 &= a_{21} d_{52} T_2, \\[1mm]
\sigma_7 &= \frac{1}{2} d_{31} d_{63} T_2^2, 
& \sigma_8 &= a_{31} d_{63} T_2.
\end{aligned}
\]
For such cases, graph-theoretic conditions are preferable to conventional algebraic methods due to their analytical convenience. In the following, we develop graph-theoretic conditions for the temporal network discussed above.
   \begin{proposition}
  Let $\mathcal{G}(\mathcal{A},\mathcal{B})$ be a digraph that is not $\mathcal{SS}$ herdable. Let $\mathcal{G}_T(\mathcal{A}_i,\mathcal{B})$ be the digraph associated with a linear temporally switching network same structure as $\mathcal{G}(\mathcal{A},\mathcal{B})$, with dynamics described by \eqref{sys1}. Let $\mathcal{G}_{s(i)}$ be the signed layered graph associated with $(\mathcal{A}_i,\mathcal{B})$ on the interval $[t_{i-1},t_i)$.  Then \(\mathcal{G}_T\) is not herdable over the interval \([t_0, t_m)\) when all snapshots share the same structure and identical edge weights.
  \end{proposition}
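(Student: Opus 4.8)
The plan is to show that when all snapshots carry \emph{the same} system matrix $\mathcal{A}$ and the same input vector $\mathcal{B}$, the reachable set $\Omega_{\{\Delta t_i\}}$ — equivalently the column span $\mathcal{I}m(\mathcal{C}_T(\mathcal{A}_i,\mathcal{B}))$ in \eqref{C-temp} — collapses onto the controllable subspace $\langle\mathcal{A}\mid\mathcal{B}\rangle$ of the single static pair. Once this is established, the non-$\mathcal{SS}$-herdability of $\mathcal{G}(\mathcal{A},\mathcal{B})$ provides, via Proposition~\ref{test for H}, a nonnegative $y$ with $\mathcal{C}(\mathcal{A},\mathcal{B})^\top y = 0$, i.e.\ $\langle\mathcal{A}\mid\mathcal{B}\rangle$ contains no strictly positive vector, and Theorem~\ref{image of C-tmp} then yields that $\mathcal{G}_T$ is not completely herdable on $[t_0,t_m]$.

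Concretely, I would proceed in three steps. \textbf{Step 1 (invariance).} Since identical structure and identical edge weights force $\mathcal{A}_i=\mathcal{A}$ and $\mathcal{B}_i=\mathcal{B}$ for every $i$, recall that $\langle\mathcal{A}\mid\mathcal{B}\rangle=\mathcal{I}m(\mathcal{C}(\mathcal{A},\mathcal{B}))=\sum_{k=0}^{n-1}\mathcal{A}^k\,\mathcal{R}(\mathcal{B})$ is $\mathcal{A}$-invariant (Cayley--Hamilton), hence invariant under every polynomial in $\mathcal{A}$ and therefore under $e^{\mathcal{A}t}=\sum_{k\ge 0}\tfrac{t^k}{k!}\mathcal{A}^k$; moreover $e^{\mathcal{A}t}$ restricts to a bijection of $\langle\mathcal{A}\mid\mathcal{B}\rangle$ onto itself. \textbf{Step 2 (collapse of the range).} Plugging $\mathcal{A}_i=\mathcal{A}$ into the reachable-set formula of~\citep{xie2003controllability}, each summand becomes $\big(\prod_{i=j}^{p}e^{\mathcal{A}\,\Delta t_i}\big)\langle\mathcal{A}\mid\mathcal{B}\rangle = e^{\mathcal{A}(\Delta t_j+\cdots+\Delta t_p)}\langle\mathcal{A}\mid\mathcal{B}\rangle=\langle\mathcal{A}\mid\mathcal{B}\rangle$ by Step~1, so $\Omega_{\{\Delta t_i\}}=\langle\mathcal{A}\mid\mathcal{B}\rangle$ for any dwell times; equivalently, every block $e^{\mathcal{A}t}\mathcal{C}(\mathcal{A},\mathcal{B})$ of $\mathcal{C}_T$ has the same column span as $\mathcal{C}(\mathcal{A},\mathcal{B})$, whence $\mathcal{I}m(\mathcal{C}_T)=\mathcal{I}m(\mathcal{C}(\mathcal{A},\mathcal{B}))$. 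This can be read off graph-theoretically as well: by the Lemma each $\mathcal{G}_{s(i)}$ is a copy of the single signed layered graph of $\mathcal{G}(\mathcal{A},\mathcal{B})$, so the signed or layer dilation that makes two rows of $\mathcal{C}(\mathcal{A},\mathcal{B})$ non-negatively linearly dependent is present in every snapshot and is not broken by concatenation. \textbf{Step 3 (conclusion).} Because $\mathcal{G}(\mathcal{A},\mathcal{B})$ is not $\mathcal{SS}$ herdable, it is in particular not herdable for the common realization used in every snapshot, so by Proposition~\ref{test for H} there is $y\ge 0$, $y\neq 0$, with $\mathcal{C}(\mathcal{A},\mathcal{B})^\top y=0$; by Step~2 the same $y$ (padded with zeros to $\mathbb{R}^{pn}$) satisfies $\mathcal{C}_T^\top y=0$, so $\mathcal{I}m(\mathcal{C}_T)$ contains no $v>0$, and Theorem~\ref{image of C-tmp} gives the claim.

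The only substantive point is the invariance assertion in Step~1 — that $e^{\mathcal{A}t}$ maps the controllable subspace onto itself — together with the careful bookkeeping that matches the products of exponentials appearing in \eqref{C-temp} and in the reachable-set formula to that one subspace; everything after that is a direct invocation of Proposition~\ref{test for H} and Theorem~\ref{image of C-tmp}. I would also flag the (benign but essential) point that ``not $\mathcal{SS}$ herdable'' is precisely what is needed to rule out herdability for the shared realization: if the common edge weights happened to come from a herdable realization the statement would be false, which is exactly consistent with the paper's earlier claim that a switch can help herdability only if at least one snapshot uses a different parametric realization.
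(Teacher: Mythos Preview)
Your proposal is correct and follows essentially the same line as the paper's proof: both establish that when every snapshot uses the identical pair $(\mathcal{A},\mathcal{B})$, the column span of $\mathcal{C}_T$ coincides with that of the static $\mathcal{C}(\mathcal{A},\mathcal{B})$, and then invoke Proposition~\ref{test for H} (the paper phrases this as the extra block-columns being linear combinations of the first $n$ columns because $e^{\mathcal{A}\Delta t}$ is invertible, whereas you phrase it as $e^{\mathcal{A}t}$-invariance of the controllable subspace --- the same fact). One small bookkeeping slip: in Step~3 there is no need to ``pad $y$ to $\mathbb{R}^{pn}$''; $\mathcal{C}_T$ still has $n$ rows, so the very same $y\in\mathbb{R}^n$ with $\mathcal{C}(\mathcal{A},\mathcal{B})^\top y=0$ already satisfies $\mathcal{C}_T^\top y=0$ once $\mathcal{I}m(\mathcal{C}_T)=\mathcal{I}m(\mathcal{C}(\mathcal{A},\mathcal{B}))$.
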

\begin{proof}
    Consider the controllability matrix associated with the temporally switching digraph in \eqref{C-temp}, in which the columns can be grouped into $p$ blocks such that each block corresponds to a respective switching. The matrix can be expanded as follows:
\begin{equation}\label{c-expansion}
\begin{aligned}
    \mathcal{C} = \big[\, 
    & \langle \mathcal{B} ~\big|~ \Psi^p_1 ~\big|~ \Psi^p_2 ~\big|~ .. ~\big|~ \Psi^p_{n-1} \rangle, \\
    & e^{\mathcal{A}_p  T_p}~ \langle \mathcal{B} ~\big|~ \Psi^{p-1}_1 ~\big|~ \Psi^{p-1}_2 ~\big|~ .. ~\big|~ \Psi^{p-1}_{n-1}\rangle, \\
    & .., e^{\mathcal{A}_p  T_p} \cdot\cdot~  e^{\mathcal{A}_2 T_2} ~\langle \mathcal{B} 
      ~\big|~ \Psi^1_1 ~\big|~ \Psi^1_2 ~\big|~ .. ~\big|~ \Psi^1_{n-1}\rangle \,\big].
\end{aligned}
\end{equation}
is formed by concatenating block columns $\Psi^i_j \in \mathbb{R}^{n\times m}$ defined as
 \( \Psi^i_j = A_i^{\,j-1} B_i, \qquad i = \{1,\dots,p\},\; j =\{1,\dots,n-1\}\) and $T_p=[t_{p}-t_{p-1})$ . This again simplifies to the following 
\[ \begin{aligned}
   \mathcal{C} = \big[\, 
    & \langle \mathcal{B} ~\big|~ \Psi^p_1 ~\big|~ \Psi^p_2 ~\big|~ \dots ~\big|~ \Psi^p_{n-1} \rangle, \\
    &  \langle e^{\mathcal{A}_p T_p}~\mathcal{B} ~\big|~ e^{\mathcal{A}_p T_p}~\Psi^{p-1}_1  ~\big|~ \dots ~\big|~ e^{\mathcal{A}_p T_p}~\Psi^{p-1}_{n-1}\rangle, \\
    & \dots,   ~\langle e^{\mathcal{A}_p T_p} \cdots e^{\mathcal{A}_2 T_2}\mathcal{B} 
      ~\big|~ e^{\mathcal{A}_p T_p} \cdots e^{\mathcal{A}_2 T_2}~\Psi^1_1 ~ \dots \\ &~\big|~e^{\mathcal{A}_p T_p} \cdots e^{\mathcal{A}_2 T_2} \Psi^1_{n-1}\rangle \,\big].
\end{aligned}
\]

The columns of \(\mathcal{C}_T\) are polynomial combinations of the terms arising from the Taylor expansion of \(e^{\mathcal{A}_i t}\). Since \(e^{\mathcal{A}_iT_i}\) is invertible for all \(i\), successive multiplications by these matrices do not alter the rank of \(\mathcal{C}_T\). Without loss of generality, let the first \(n\) columns of \(\mathcal{C}_T\) correspond to the realization \((\mathcal{A}_p,\mathcal{B})\). The remaining columns are linear combinations of the columns of the controllability matrices \(\mathcal{C}_k\), associated with the realizations \((\mathcal{A}_k,\mathcal{B})\), \(k \in \{1,..,p-1\}\).

Since all realizations \((\mathcal{A}_k,\mathcal{B})\) are identical to \((\mathcal{A}_p,\mathcal{B})\), the corresponding controllability matrices have the same sign structure. If \(\mathcal{C}_p\) is not herdable, then by Proposition~\ref{test for H}, it contains nonnegatively linearly dependent rows. This dependency therefore persists in \(\mathcal{C}_T\), since the additional columns generated through temporal switching are only linear combinations of the existing columns. Hence, by Proposition~\ref{test for H}, the temporally switching network remains unherdable. \hfill \qed

\end{proof}

For illustration, reconsider the example in Fig.~\ref{main di}(\subref{temp_1}). As discussed earlier, the controllability matrix corresponding to the temporal sequence \(\{\mathcal{A}_1,\mathcal{A}_2\}\) with \(\mathcal{A}_1=\mathcal{A}_2\) contains nonnegative linear dependencies among its rows. Hence, temporal switching does not improve herdability, and the system remains not \(\mathcal{SS}\)-herdable. Now consider the following temporal sequence with the same network structure but different parametric realizations:
\begin{figure}[ht]
\small
\[
\mathcal{A}_1 =
\begin{bmatrix}
0 & 0 & 0 & 0 \\
1 & 0 & 0 & 0 \\
-1 & 0 & 0 & 0 \\
-1 & 0 & 0 & 0
\end{bmatrix};
\quad
\mathcal{A}_2 =
\begin{bmatrix}
0 & 0 & 0 & 0 \\
2 & 0 & 0 & 0 \\
-3 & 0 & 0 & 0 \\
-4 & 0 & 0 & 0
\end{bmatrix};
\quad
\mathcal{B}=
\begin{bmatrix}
1  \\
0  \\
0  \\
0 
\end{bmatrix}
\]
\end{figure}

The corresponding controllability matrix is given below:
{\small
\[
\mathcal{C}_T(\mathcal{A}_i,\mathcal{B}) =
\begin{bNiceMatrix}
1 & 0 & 0 & 0 &1 & 0 & 0 & 0\\
\rowcolor{gray!15}0 & 2 & 0 & 0 &2T_2 & 1 & 0 & 0\\
\rowcolor{gray!15}0 & -3 & 0 & 0 &-3T_2 & -1 & 0 & 0\\
\rowcolor{gray!15}0 & -4& 0 & 0 &-4T_2 & -1 & 0 & 0\\
\end{bNiceMatrix}
\] }
From the above matrix, it is evident that for $T_2 > 0$, no rows of $\mathcal{C}_T$ are non–negatively linearly dependent. Hence, the temporally switching system with this realization is $\mathcal{SS}$ herdable. In the following section, we introduce path–sign matching, a relaxed variant of the sign-matching concept in~\cite{pradeep}, and demonstrate how nodes become matched across the corresponding snapshots.

\section{Structural Sign Herdability over Fixed Temporal Topologies}\label{S-4}
In this section, we introduce a relaxed form of sign matching, termed path–sign matching, which facilitates identifying the nodes that become herdable through the temporal evolution of the network across snapshots.
 
  \subsection{Path sign matching in time-varying networks}\vspace{-2mm}

We use a relaxed version of sign matching \citep{pradeep} of nodes in a signed layered graph associated with the digraph, to show how the temporal segmentation evolves with each snapshot.

\begin{definition} \label{path sign}
Let \(\mathcal{G}_T(\mathcal{A}_i,\mathcal{B})\) be the digraph associated with a linear temporally switching network consisting of \(p\) snapshots, and let \(\mathcal{G}_{s(i)}\) denote the signed layered graph associated with \((\mathcal{A}_i,\mathcal{B})\) on the interval \([t_{i-1},t_i)\). Then the following definitions hold:
\begin{itemize}
    \item A set of nodes \(S \subseteq \mathcal{V}_{L_d}\) in layer \(L_d\) is said to be \emph{path-sign-matched} in snapshot \(i\) if, for every node in \(S\), the sign of the sum of the products of all path weights from the leader node to the corresponding node in $S$ with length \(d\) is identical.

    \item The complementary set \(\bar{S} = \mathcal{V}_{L_d}\setminus S\) is said to be \emph{path-sign-matched in a subsequent snapshot} if, for every node in \(\bar{S}\), the sign of the sum of the products of all path weights from the leader node along all paths of length \(d\) is identical, while the corresponding magnitude differs from that in snapshot \(i\) due to a different parametric realization.
\end{itemize}

Moreover, the common sign associated with the nodes in \(S\) is opposite to the common sign associated with the nodes in \(\bar{S}\).
\end{definition}

As an illustrative example, consider the digraph in Fig.~\ref{total SM}(\subref{SM1}). Since the graph is a tree, its signed layered representation $\mathcal{G}_s$ is identical to the original digraph. In $\mathcal{G}_s$, the leader node~1 is in layer $L_1$, and each remaining node is assigned to a layer according to the length of its walk from the leader.

It is well known that in the signed layered graph of an input–connected network, every node in each layer admits either a positive or a negative walk from the leader. Consequently, if a node is not path-sign–matched with the leader in one snapshot, it can get path-sign-matched with the alternate sign in a subsequent snapshot. This observation forms the core intuition underlying our analysis.

In Fig.~\ref{total SM}(\subref{SM1}), during the first snapshot, nodes with a negative walk in $L_2$ and a positive walk in $L_3$ are path-matched. In the second snapshot, the remaining nodes are sign-matched within their respective layers, as shown in the figure. Notably, the nodes matched in the first snapshot are already herded by the time the second snapshot occurs.
\vspace{-1mm}
\begin{figure}[ht]
  \centering
   \begin{subfigure}{0.19\textwidth} 
    \centering

\tikzset{every picture/.style={line width=0.75pt}} 

\begin{tikzpicture}[x=0.75pt,y=0.75pt,yscale=-1,xscale=1]

\draw  [draw opacity=0][fill={rgb, 255:red, 74; green, 144; blue, 226 }  ,fill opacity=0.21 ] (308.83,391) .. controls (312.92,385.18) and (324.33,393.25) .. (341.58,392) .. controls (358.83,390.75) and (361.08,384.5) .. (372.33,389.5) .. controls (383.58,394.5) and (368.76,401.28) .. (356.67,403.39) .. controls (344.58,405.5) and (330.83,405.75) .. (321.17,404.14) .. controls (311.51,402.53) and (304.75,396.82) .. (308.83,391) -- cycle ;
\draw  [draw opacity=0][fill={rgb, 255:red, 74; green, 144; blue, 226 }  ,fill opacity=0.21 ] (309.58,412.75) .. controls (313.67,406.93) and (325.08,415) .. (342.33,413.75) .. controls (359.58,412.5) and (361.83,406.25) .. (373.08,411.25) .. controls (384.33,416.25) and (369.51,423.03) .. (357.42,425.14) .. controls (345.33,427.25) and (331.58,427.5) .. (321.92,425.89) .. controls (312.26,424.28) and (305.5,418.57) .. (309.58,412.75) -- cycle ;
\draw    (342.1,399.18) -- (342.6,420.18) ;
\draw [shift={(342.38,411.08)}, rotate = 268.64] [fill={rgb, 255:red, 0; green, 0; blue, 0 }  ][line width=0.08]  [draw opacity=0] (5.36,-2.57) -- (0,0) -- (5.36,2.57) -- (3.56,0) -- cycle    ;
\draw    (315.86,392.97) -- (315.89,413.54) ;
\draw [shift={(315.88,404.66)}, rotate = 269.92] [fill={rgb, 255:red, 0; green, 0; blue, 0 }  ][line width=0.08]  [draw opacity=0] (5.36,-2.57) -- (0,0) -- (5.36,2.57) -- (3.56,0) -- cycle    ;
\draw    (367.34,393.8) -- (367.37,414.37) ;
\draw [shift={(367.35,405.48)}, rotate = 269.92] [fill={rgb, 255:red, 0; green, 0; blue, 0 }  ][line width=0.08]  [draw opacity=0] (5.36,-2.57) -- (0,0) -- (5.36,2.57) -- (3.56,0) -- cycle    ;
\draw    (341.76,375.23) -- (367.34,393.8) ;
\draw [shift={(355.68,385.34)}, rotate = 215.96] [fill={rgb, 255:red, 0; green, 0; blue, 0 }  ][line width=0.08]  [draw opacity=0] (5.36,-2.57) -- (0,0) -- (5.36,2.57) -- (3.56,0) -- cycle    ;
\draw  [fill={rgb, 255:red, 6; green, 80; blue, 250 }  ,fill opacity=1 ] (364.7,393.8) .. controls (364.7,392.35) and (365.88,391.18) .. (367.34,391.18) .. controls (368.79,391.18) and (369.97,392.35) .. (369.97,393.8) .. controls (369.97,395.24) and (368.79,396.41) .. (367.34,396.41) .. controls (365.88,396.41) and (364.7,395.24) .. (364.7,393.8) -- cycle ;
\draw [color={rgb, 255:red, 161; green, 27; blue, 43 }  ,draw opacity=1 ]   (341.69,351.42) .. controls (343.35,353.09) and (343.35,354.75) .. (341.68,356.42) -- (341.67,357.89) -- (341.66,365.89) ;
\draw [shift={(341.66,368.89)}, rotate = 270.1] [fill={rgb, 255:red, 161; green, 27; blue, 43 }  ,fill opacity=1 ][line width=0.08]  [draw opacity=0] (7.14,-3.43) -- (0,0) -- (7.14,3.43) -- (4.74,0) -- cycle    ;
\draw  [fill={rgb, 255:red, 239; green, 43; blue, 67 }  ,fill opacity=1 ] (339.05,348.8) .. controls (339.05,347.35) and (340.23,346.18) .. (341.69,346.18) .. controls (343.14,346.18) and (344.32,347.35) .. (344.32,348.8) .. controls (344.32,350.24) and (343.14,351.42) .. (341.69,351.42) .. controls (340.23,351.42) and (339.05,350.24) .. (339.05,348.8) -- cycle ;
\draw    (341.76,375.23) -- (315.86,392.97) ;
\draw [shift={(327.66,384.89)}, rotate = 325.59] [fill={rgb, 255:red, 0; green, 0; blue, 0 }  ][line width=0.08]  [draw opacity=0] (5.36,-2.57) -- (0,0) -- (5.36,2.57) -- (3.56,0) -- cycle    ;
\draw    (341.77,375.4) -- (342.1,399.18) ;
\draw [shift={(341.95,388.69)}, rotate = 269.22] [fill={rgb, 255:red, 0; green, 0; blue, 0 }  ][line width=0.08]  [draw opacity=0] (5.36,-2.57) -- (0,0) -- (5.36,2.57) -- (3.56,0) -- cycle    ;
\draw  [color={rgb, 255:red, 64; green, 95; blue, 17 }  ,draw opacity=1 ][fill={rgb, 255:red, 85; green, 211; blue, 33 }  ,fill opacity=1 ] (341.66,369.7) .. controls (343.13,369.67) and (344.35,370.92) .. (344.38,372.49) .. controls (344.41,374.07) and (343.25,375.37) .. (341.77,375.4) .. controls (340.3,375.43) and (339.08,374.18) .. (339.05,372.61) .. controls (339.02,371.03) and (340.18,369.73) .. (341.66,369.7) -- cycle ;
\draw  [fill={rgb, 255:red, 6; green, 80; blue, 250 }  ,fill opacity=1 ] (313.23,392.97) .. controls (313.23,391.52) and (314.41,390.35) .. (315.86,390.35) .. controls (317.32,390.35) and (318.5,391.52) .. (318.5,392.97) .. controls (318.5,394.41) and (317.32,395.59) .. (315.86,395.59) .. controls (314.41,395.59) and (313.23,394.41) .. (313.23,392.97) -- cycle ;
\draw  [fill={rgb, 255:red, 6; green, 80; blue, 250 }  ,fill opacity=1 ] (339.46,399.18) .. controls (339.46,397.74) and (340.64,396.57) .. (342.1,396.57) .. controls (343.55,396.57) and (344.73,397.74) .. (344.73,399.18) .. controls (344.73,400.63) and (343.55,401.8) .. (342.1,401.8) .. controls (340.64,401.8) and (339.46,400.63) .. (339.46,399.18) -- cycle ;
\draw  [fill={rgb, 255:red, 6; green, 80; blue, 250 }  ,fill opacity=1 ] (313.26,413.54) .. controls (313.26,412.1) and (314.44,410.93) .. (315.89,410.93) .. controls (317.35,410.93) and (318.53,412.1) .. (318.53,413.54) .. controls (318.53,414.99) and (317.35,416.16) .. (315.89,416.16) .. controls (314.44,416.16) and (313.26,414.99) .. (313.26,413.54) -- cycle ;
\draw  [fill={rgb, 255:red, 6; green, 80; blue, 250 }  ,fill opacity=1 ] (365.26,413.54) .. controls (365.26,412.1) and (366.44,410.93) .. (367.89,410.93) .. controls (369.35,410.93) and (370.53,412.1) .. (370.53,413.54) .. controls (370.53,414.99) and (369.35,416.16) .. (367.89,416.16) .. controls (366.44,416.16) and (365.26,414.99) .. (365.26,413.54) -- cycle ;
\draw  [fill={rgb, 255:red, 6; green, 80; blue, 250 }  ,fill opacity=1 ] (339.96,420.18) .. controls (339.96,418.74) and (341.14,417.57) .. (342.6,417.57) .. controls (344.05,417.57) and (345.23,418.74) .. (345.23,420.18) .. controls (345.23,421.63) and (344.05,422.8) .. (342.6,422.8) .. controls (341.14,422.8) and (339.96,421.63) .. (339.96,420.18) -- cycle ;
\draw  [draw opacity=0][fill={rgb, 255:red, 74; green, 144; blue, 226 }  ,fill opacity=0.19 ] (318.71,372.55) .. controls (318.71,368.5) and (329.01,365.22) .. (341.71,365.22) .. controls (354.42,365.22) and (364.71,368.5) .. (364.71,372.55) .. controls (364.71,376.6) and (354.42,379.88) .. (341.71,379.88) .. controls (329.01,379.88) and (318.71,376.6) .. (318.71,372.55) -- cycle ;

\draw (346.8,365.69) node [anchor=north west][inner sep=0.75pt]  [font=\tiny]  {$1$};
\draw (304.54,387.16) node [anchor=north west][inner sep=0.75pt]  [font=\tiny]  {$2$};
\draw (345.59,396.86) node [anchor=north west][inner sep=0.75pt]  [font=\tiny]  {$3$};
\draw (370.76,390.28) node [anchor=north west][inner sep=0.75pt]  [font=\tiny]  {$4$};
\draw (337.92,337.19) node [anchor=north west][inner sep=0.75pt]  [font=\tiny]  {$u(t)$};
\draw (345.78,354.22) node [anchor=north west][inner sep=0.75pt]  [font=\tiny,color={rgb, 255:red, 128; green, 128; blue, 128 }  ,opacity=1 ]  {$+$};
\draw (330.81,384.5) node [anchor=north west][inner sep=0.75pt]  [font=\tiny,color={rgb, 255:red, 128; green, 128; blue, 128 }  ,opacity=1 ]  {$-$};
\draw (355.81,376.5) node [anchor=north west][inner sep=0.75pt]  [font=\tiny,color={rgb, 255:red, 128; green, 128; blue, 128 }  ,opacity=1 ]  {$-$};
\draw (319.78,374.72) node [anchor=north west][inner sep=0.75pt]  [font=\tiny,color={rgb, 255:red, 128; green, 128; blue, 128 }  ,opacity=1 ]  {$+$};
\draw (303.04,412.16) node [anchor=north west][inner sep=0.75pt]  [font=\tiny]  {$5$};
\draw (347.54,416.66) node [anchor=north west][inner sep=0.75pt]  [font=\tiny]  {$6$};
\draw (373.74,410.16) node [anchor=north west][inner sep=0.75pt]  [font=\tiny]  {$7$};
\draw (330.78,408.22) node [anchor=north west][inner sep=0.75pt]  [font=\tiny,color={rgb, 255:red, 128; green, 128; blue, 128 }  ,opacity=1 ]  {$+$};
\draw (373.31,400.5) node [anchor=north west][inner sep=0.75pt]  [font=\tiny,color={rgb, 255:red, 128; green, 128; blue, 128 }  ,opacity=1 ]  {$-$};
\draw (304.28,401.22) node [anchor=north west][inner sep=0.75pt]  [font=\tiny,color={rgb, 255:red, 128; green, 128; blue, 128 }  ,opacity=1 ]  {$+$};
\draw (322.5,394.23) node [anchor=north west][inner sep=0.75pt]  [font=\tiny,color={rgb, 255:red, 209; green, 56; blue, 15 }  ,opacity=1 ]  {$L_{2}$};
\draw (320.53,416.94) node [anchor=north west][inner sep=0.75pt]  [font=\tiny,color={rgb, 255:red, 209; green, 56; blue, 15 }  ,opacity=1 ]  {$L_{3}$};
\draw (326,367.73) node [anchor=north west][inner sep=0.75pt]  [font=\tiny,color={rgb, 255:red, 209; green, 56; blue, 15 }  ,opacity=1 ]  {$L_{1}$};

\end{tikzpicture}
\caption{A digraph $\mathcal{G}_2$ $(\mathcal{A,B})$}
\label{SM1}
\end{subfigure}
 \begin{subfigure}{0.14\textwidth} 
    \centering

\tikzset{every picture/.style={line width=0.75pt}} 

\begin{tikzpicture}[x=0.75pt,y=0.75pt,yscale=-1,xscale=1]

\draw  [draw opacity=0][fill={rgb, 255:red, 74; green, 144; blue, 226 }  ,fill opacity=0.21 ] (100.83,532) .. controls (104.92,526.18) and (116.33,534.25) .. (133.58,533) .. controls (150.83,531.75) and (153.08,525.5) .. (164.33,530.5) .. controls (175.58,535.5) and (160.76,542.28) .. (148.67,544.39) .. controls (136.58,546.5) and (122.83,546.75) .. (113.17,545.14) .. controls (103.51,543.53) and (96.75,537.82) .. (100.83,532) -- cycle ;
\draw  [draw opacity=0][fill={rgb, 255:red, 74; green, 144; blue, 226 }  ,fill opacity=0.21 ] (101.58,553.75) .. controls (105.67,547.93) and (117.08,556) .. (134.33,554.75) .. controls (151.58,553.5) and (153.83,547.25) .. (165.08,552.25) .. controls (176.33,557.25) and (161.51,564.03) .. (149.42,566.14) .. controls (137.33,568.25) and (123.58,568.5) .. (113.92,566.89) .. controls (104.26,565.28) and (97.5,559.57) .. (101.58,553.75) -- cycle ;
\draw [color={rgb, 255:red, 128; green, 128; blue, 128 }  ,draw opacity=0.35 ] [dash pattern={on 1.5pt off 0.75pt on 0.75pt off 1.5pt}]  (133.64,516.39) -- (119.26,526.14) -- (107.73,533.96) ;
\draw [shift={(125.29,522.05)}, rotate = 325.86] [fill={rgb, 255:red, 128; green, 128; blue, 128 }  ,fill opacity=0.35 ][line width=0.08]  [draw opacity=0] (5.36,-2.57) -- (0,0) -- (5.36,2.57) -- (3.56,0) -- cycle    ;
\draw [shift={(112.34,530.84)}, rotate = 325.86] [fill={rgb, 255:red, 128; green, 128; blue, 128 }  ,fill opacity=0.35 ][line width=0.08]  [draw opacity=0] (5.36,-2.57) -- (0,0) -- (5.36,2.57) -- (3.56,0) -- cycle    ;
\draw [color={rgb, 255:red, 128; green, 128; blue, 128 }  ,draw opacity=0.35 ] [dash pattern={on 1.5pt off 0.75pt on 0.75pt off 1.5pt}]  (134.16,542.82) -- (134.66,563.82) ;
\draw [shift={(134.45,554.72)}, rotate = 268.64] [fill={rgb, 255:red, 128; green, 128; blue, 128 }  ,fill opacity=0.35 ][line width=0.08]  [draw opacity=0] (5.36,-2.57) -- (0,0) -- (5.36,2.57) -- (3.56,0) -- cycle    ;
\draw [color={rgb, 255:red, 154; green, 36; blue, 50 }  ,draw opacity=1 ]   (107.93,536.6) -- (107.96,557.18) ;
\draw [shift={(107.95,548.29)}, rotate = 269.92] [fill={rgb, 255:red, 154; green, 36; blue, 50 }  ,fill opacity=1 ][line width=0.08]  [draw opacity=0] (5.36,-2.57) -- (0,0) -- (5.36,2.57) -- (3.56,0) -- cycle    ;
\draw [color={rgb, 255:red, 154; green, 36; blue, 50 }  ,draw opacity=1 ]   (159.4,537.43) -- (159.43,558) ;
\draw [shift={(159.42,549.12)}, rotate = 269.92] [fill={rgb, 255:red, 154; green, 36; blue, 50 }  ,fill opacity=1 ][line width=0.08]  [draw opacity=0] (5.36,-2.57) -- (0,0) -- (5.36,2.57) -- (3.56,0) -- cycle    ;
\draw [color={rgb, 255:red, 154; green, 36; blue, 50 }  ,draw opacity=1 ]   (133.67,515.33) -- (159.25,533.89) ;
\draw [shift={(147.59,525.44)}, rotate = 215.96] [fill={rgb, 255:red, 154; green, 36; blue, 50 }  ,fill opacity=1 ][line width=0.08]  [draw opacity=0] (5.36,-2.57) -- (0,0) -- (5.36,2.57) -- (3.56,0) -- cycle    ;
\draw [color={rgb, 255:red, 154; green, 36; blue, 50 }  ,draw opacity=1 ]   (133.67,515.33) -- (133.99,539.12) ;
\draw [shift={(133.85,528.62)}, rotate = 269.22] [fill={rgb, 255:red, 154; green, 36; blue, 50 }  ,fill opacity=1 ][line width=0.08]  [draw opacity=0] (5.36,-2.57) -- (0,0) -- (5.36,2.57) -- (3.56,0) -- cycle    ;
\draw  [fill={rgb, 255:red, 6; green, 80; blue, 250 }  ,fill opacity=1 ] (156.57,534.79) .. controls (156.57,533.34) and (157.75,532.17) .. (159.2,532.17) .. controls (160.66,532.17) and (161.84,533.34) .. (161.84,534.79) .. controls (161.84,536.23) and (160.66,537.41) .. (159.2,537.41) .. controls (157.75,537.41) and (156.57,536.23) .. (156.57,534.79) -- cycle ;
\draw [color={rgb, 255:red, 161; green, 27; blue, 43 }  ,draw opacity=1 ]   (133.55,492.41) .. controls (135.22,494.08) and (135.21,495.75) .. (133.54,497.41) -- (133.54,498.89) -- (133.53,506.89) ;
\draw [shift={(133.52,509.89)}, rotate = 270.1] [fill={rgb, 255:red, 161; green, 27; blue, 43 }  ,fill opacity=1 ][line width=0.08]  [draw opacity=0] (7.14,-3.43) -- (0,0) -- (7.14,3.43) -- (4.74,0) -- cycle    ;
\draw  [fill={rgb, 255:red, 239; green, 43; blue, 67 }  ,fill opacity=1 ] (130.92,489.79) .. controls (130.92,488.35) and (132.1,487.17) .. (133.55,487.17) .. controls (135.01,487.17) and (136.18,488.35) .. (136.18,489.79) .. controls (136.18,491.24) and (135.01,492.41) .. (133.55,492.41) .. controls (132.1,492.41) and (130.92,491.24) .. (130.92,489.79) -- cycle ;
\draw  [color={rgb, 255:red, 64; green, 95; blue, 17 }  ,draw opacity=1 ][fill={rgb, 255:red, 85; green, 211; blue, 33 }  ,fill opacity=1 ] (133.52,510.69) .. controls (134.99,510.66) and (136.21,511.91) .. (136.25,513.49) .. controls (136.28,515.06) and (135.11,516.36) .. (133.64,516.39) .. controls (132.17,516.42) and (130.95,515.17) .. (130.91,513.6) .. controls (130.88,512.03) and (132.05,510.73) .. (133.52,510.69) -- cycle ;
\draw  [fill={rgb, 255:red, 6; green, 80; blue, 250 }  ,fill opacity=1 ] (105.1,533.96) .. controls (105.1,532.52) and (106.28,531.35) .. (107.73,531.35) .. controls (109.18,531.35) and (110.36,532.52) .. (110.36,533.96) .. controls (110.36,535.41) and (109.18,536.58) .. (107.73,536.58) .. controls (106.28,536.58) and (105.1,535.41) .. (105.1,533.96) -- cycle ;
\draw  [fill={rgb, 255:red, 6; green, 80; blue, 250 }  ,fill opacity=1 ] (131.33,540.18) .. controls (131.33,538.73) and (132.51,537.56) .. (133.96,537.56) .. controls (135.42,537.56) and (136.59,538.73) .. (136.59,540.18) .. controls (136.59,541.62) and (135.42,542.79) .. (133.96,542.79) .. controls (132.51,542.79) and (131.33,541.62) .. (131.33,540.18) -- cycle ;
\draw  [fill={rgb, 255:red, 6; green, 80; blue, 250 }  ,fill opacity=1 ] (105.33,557.18) .. controls (105.33,555.73) and (106.51,554.56) .. (107.96,554.56) .. controls (109.42,554.56) and (110.59,555.73) .. (110.59,557.18) .. controls (110.59,558.62) and (109.42,559.79) .. (107.96,559.79) .. controls (106.51,559.79) and (105.33,558.62) .. (105.33,557.18) -- cycle ;
\draw  [fill={rgb, 255:red, 6; green, 80; blue, 250 }  ,fill opacity=1 ] (156.33,557.18) .. controls (156.33,555.73) and (157.51,554.56) .. (158.96,554.56) .. controls (160.42,554.56) and (161.59,555.73) .. (161.59,557.18) .. controls (161.59,558.62) and (160.42,559.79) .. (158.96,559.79) .. controls (157.51,559.79) and (156.33,558.62) .. (156.33,557.18) -- cycle ;
\draw  [fill={rgb, 255:red, 6; green, 80; blue, 250 }  ,fill opacity=1 ] (132.03,563.82) .. controls (132.03,562.37) and (133.21,561.2) .. (134.66,561.2) .. controls (136.12,561.2) and (137.3,562.37) .. (137.3,563.82) .. controls (137.3,565.26) and (136.12,566.43) .. (134.66,566.43) .. controls (133.21,566.43) and (132.03,565.26) .. (132.03,563.82) -- cycle ;
\draw  [draw opacity=0][fill={rgb, 255:red, 74; green, 144; blue, 226 }  ,fill opacity=0.19 ] (110.58,513.54) .. controls (110.58,509.49) and (120.88,506.21) .. (133.58,506.21) .. controls (146.28,506.21) and (156.58,509.49) .. (156.58,513.54) .. controls (156.58,517.59) and (146.28,520.88) .. (133.58,520.88) .. controls (120.88,520.88) and (110.58,517.59) .. (110.58,513.54) -- cycle ;

\draw (96.4,528.15) node [anchor=north west][inner sep=0.75pt]  [font=\tiny]  {$2$};
\draw (137.45,537.85) node [anchor=north west][inner sep=0.75pt]  [font=\tiny]  {$3$};
\draw (162.63,531.27) node [anchor=north west][inner sep=0.75pt]  [font=\tiny]  {$4$};
\draw (139.85,495.36) node [anchor=north west][inner sep=0.75pt]  [font=\tiny,color={rgb, 255:red, 128; green, 128; blue, 128 }  ,opacity=1 ]  {$+$};
\draw (113.85,515.86) node [anchor=north west][inner sep=0.75pt]  [font=\tiny,color={rgb, 255:red, 128; green, 128; blue, 128 }  ,opacity=1 ]  {$+$};
\draw (122.7,527.44) node [anchor=north west][inner sep=0.75pt]  [font=\tiny,color={rgb, 255:red, 128; green, 128; blue, 128 }  ,opacity=1 ]  {$-$};
\draw (149.38,515.14) node [anchor=north west][inner sep=0.75pt]  [font=\tiny,color={rgb, 255:red, 128; green, 128; blue, 128 }  ,opacity=1 ]  {$-$};
\draw (93.6,553.29) node [anchor=north west][inner sep=0.75pt]  [font=\tiny]  {$5$};
\draw (138.1,557.79) node [anchor=north west][inner sep=0.75pt]  [font=\tiny]  {$6$};
\draw (165.1,555.29) node [anchor=north west][inner sep=0.75pt]  [font=\tiny]  {$7$};
\draw (121.35,549.36) node [anchor=north west][inner sep=0.75pt]  [font=\tiny,color={rgb, 255:red, 128; green, 128; blue, 128 }  ,opacity=1 ]  {$+$};
\draw (163.88,541.64) node [anchor=north west][inner sep=0.75pt]  [font=\tiny,color={rgb, 255:red, 128; green, 128; blue, 128 }  ,opacity=1 ]  {$-$};
\draw (94.85,542.36) node [anchor=north west][inner sep=0.75pt]  [font=\tiny,color={rgb, 255:red, 128; green, 128; blue, 128 }  ,opacity=1 ]  {$+$};
\draw (127.78,475.47) node [anchor=north west][inner sep=0.75pt]  [font=\tiny]  {$u( t)$};
\draw (140.8,509.69) node [anchor=north west][inner sep=0.75pt]  [font=\tiny]  {$1$};

\end{tikzpicture}
\caption{Snapshot 1}
\label{sm2}
\end{subfigure}
   \begin{subfigure}{0.14\textwidth} 
    \centering

\tikzset{every picture/.style={line width=0.75pt}} 

\begin{tikzpicture}[x=0.75pt,y=0.75pt,yscale=-1,xscale=1]

\draw  [draw opacity=0][fill={rgb, 255:red, 74; green, 144; blue, 226 }  ,fill opacity=0.19 ] (532.72,316.44) .. controls (532.72,312.39) and (543.01,309.11) .. (555.72,309.11) .. controls (568.42,309.11) and (578.72,312.39) .. (578.72,316.44) .. controls (578.72,320.49) and (568.42,323.78) .. (555.72,323.78) .. controls (543.01,323.78) and (532.72,320.49) .. (532.72,316.44) -- cycle ;
\draw  [draw opacity=0][fill={rgb, 255:red, 74; green, 144; blue, 226 }  ,fill opacity=0.21 ] (525.83,333.67) .. controls (529.92,327.85) and (541.33,335.92) .. (558.58,334.67) .. controls (575.83,333.42) and (578.08,327.17) .. (589.33,332.17) .. controls (600.58,337.17) and (585.76,343.94) .. (573.67,346.05) .. controls (561.58,348.17) and (547.83,348.42) .. (538.17,346.8) .. controls (528.51,345.19) and (521.75,339.49) .. (525.83,333.67) -- cycle ;
\draw  [draw opacity=0][fill={rgb, 255:red, 74; green, 144; blue, 226 }  ,fill opacity=0.21 ] (526.58,355.42) .. controls (530.67,349.6) and (542.08,357.67) .. (559.33,356.42) .. controls (576.58,355.17) and (578.83,348.92) .. (590.08,353.92) .. controls (601.33,358.92) and (586.51,365.69) .. (574.42,367.8) .. controls (562.33,369.92) and (548.58,370.17) .. (538.92,368.55) .. controls (529.26,366.94) and (522.5,361.24) .. (526.58,355.42) -- cycle ;
\draw [color={rgb, 255:red, 135; green, 38; blue, 52 }  ,draw opacity=1 ]   (556.23,344.52) -- (556.73,365.52) ;
\draw [shift={(556.51,356.42)}, rotate = 268.64] [fill={rgb, 255:red, 135; green, 38; blue, 52 }  ,fill opacity=1 ][line width=0.08]  [draw opacity=0] (5.36,-2.57) -- (0,0) -- (5.36,2.57) -- (3.56,0) -- cycle    ;
\draw [color={rgb, 255:red, 152; green, 22; blue, 36 }  ,draw opacity=0.24 ]   (530,338.3) -- (530.03,358.88) ;
\draw [shift={(530.02,349.99)}, rotate = 269.92] [fill={rgb, 255:red, 152; green, 22; blue, 36 }  ,fill opacity=0.24 ][line width=0.08]  [draw opacity=0] (5.36,-2.57) -- (0,0) -- (5.36,2.57) -- (3.56,0) -- cycle    ;
\draw [color={rgb, 255:red, 152; green, 22; blue, 36 }  ,draw opacity=0.23 ]   (581.47,339.13) -- (581.5,359.7) ;
\draw [shift={(581.49,350.82)}, rotate = 269.92] [fill={rgb, 255:red, 152; green, 22; blue, 36 }  ,fill opacity=0.23 ][line width=0.08]  [draw opacity=0] (5.36,-2.57) -- (0,0) -- (5.36,2.57) -- (3.56,0) -- cycle    ;
\draw [color={rgb, 255:red, 135; green, 38; blue, 52 }  ,draw opacity=1 ]   (555.72,319.37) -- (529.83,337.1) ;
\draw [shift={(541.62,329.03)}, rotate = 325.59] [fill={rgb, 255:red, 135; green, 38; blue, 52 }  ,fill opacity=1 ][line width=0.08]  [draw opacity=0] (5.36,-2.57) -- (0,0) -- (5.36,2.57) -- (3.56,0) -- cycle    ;
\draw [color={rgb, 255:red, 152; green, 22; blue, 36 }  ,draw opacity=0.24 ]   (555.72,318.87) -- (581.3,337.43) ;
\draw [shift={(569.64,328.97)}, rotate = 215.96] [fill={rgb, 255:red, 152; green, 22; blue, 36 }  ,fill opacity=0.24 ][line width=0.08]  [draw opacity=0] (5.36,-2.57) -- (0,0) -- (5.36,2.57) -- (3.56,0) -- cycle    ;
\draw [color={rgb, 255:red, 152; green, 22; blue, 36 }  ,draw opacity=0.24 ]   (555.73,319.03) -- (556.06,342.82) ;
\draw [shift={(555.91,332.32)}, rotate = 269.22] [fill={rgb, 255:red, 152; green, 22; blue, 36 }  ,fill opacity=0.24 ][line width=0.08]  [draw opacity=0] (5.36,-2.57) -- (0,0) -- (5.36,2.57) -- (3.56,0) -- cycle    ;
\draw  [fill={rgb, 255:red, 6; green, 80; blue, 250 }  ,fill opacity=1 ] (578.66,337.43) .. controls (578.66,335.98) and (579.84,334.81) .. (581.3,334.81) .. controls (582.75,334.81) and (583.93,335.98) .. (583.93,337.43) .. controls (583.93,338.87) and (582.75,340.05) .. (581.3,340.05) .. controls (579.84,340.05) and (578.66,338.87) .. (578.66,337.43) -- cycle ;
\draw [color={rgb, 255:red, 161; green, 27; blue, 43 }  ,draw opacity=1 ]   (555.65,295.05) .. controls (557.31,296.72) and (557.31,298.38) .. (555.64,300.05) -- (555.64,301.53) -- (555.62,309.53) ;
\draw [shift={(555.62,312.53)}, rotate = 270.1] [fill={rgb, 255:red, 161; green, 27; blue, 43 }  ,fill opacity=1 ][line width=0.08]  [draw opacity=0] (7.14,-3.43) -- (0,0) -- (7.14,3.43) -- (4.74,0) -- cycle    ;
\draw  [fill={rgb, 255:red, 239; green, 43; blue, 67 }  ,fill opacity=1 ] (553.01,292.43) .. controls (553.01,290.99) and (554.19,289.81) .. (555.65,289.81) .. controls (557.1,289.81) and (558.28,290.99) .. (558.28,292.43) .. controls (558.28,293.88) and (557.1,295.05) .. (555.65,295.05) .. controls (554.19,295.05) and (553.01,293.88) .. (553.01,292.43) -- cycle ;
\draw  [color={rgb, 255:red, 64; green, 95; blue, 17 }  ,draw opacity=1 ][fill={rgb, 255:red, 85; green, 211; blue, 33 }  ,fill opacity=1 ] (555.62,313.33) .. controls (557.09,313.3) and (558.31,314.55) .. (558.34,316.13) .. controls (558.37,317.7) and (557.21,319) .. (555.73,319.03) .. controls (554.26,319.06) and (553.04,317.81) .. (553.01,316.24) .. controls (552.98,314.67) and (554.14,313.37) .. (555.62,313.33) -- cycle ;
\draw  [fill={rgb, 255:red, 6; green, 80; blue, 250 }  ,fill opacity=1 ] (527.19,336.6) .. controls (527.19,335.16) and (528.37,333.99) .. (529.83,333.99) .. controls (531.28,333.99) and (532.46,335.16) .. (532.46,336.6) .. controls (532.46,338.05) and (531.28,339.22) .. (529.83,339.22) .. controls (528.37,339.22) and (527.19,338.05) .. (527.19,336.6) -- cycle ;
\draw  [fill={rgb, 255:red, 6; green, 80; blue, 250 }  ,fill opacity=1 ] (553.42,342.82) .. controls (553.42,341.37) and (554.6,340.2) .. (556.06,340.2) .. controls (557.51,340.2) and (558.69,341.37) .. (558.69,342.82) .. controls (558.69,344.26) and (557.51,345.43) .. (556.06,345.43) .. controls (554.6,345.43) and (553.42,344.26) .. (553.42,342.82) -- cycle ;
\draw  [fill={rgb, 255:red, 6; green, 80; blue, 250 }  ,fill opacity=1 ] (527.4,358.88) .. controls (527.4,357.43) and (528.57,356.26) .. (530.03,356.26) .. controls (531.48,356.26) and (532.66,357.43) .. (532.66,358.88) .. controls (532.66,360.32) and (531.48,361.49) .. (530.03,361.49) .. controls (528.57,361.49) and (527.4,360.32) .. (527.4,358.88) -- cycle ;
\draw  [fill={rgb, 255:red, 6; green, 80; blue, 250 }  ,fill opacity=1 ] (578.4,358.88) .. controls (578.4,357.43) and (579.57,356.26) .. (581.03,356.26) .. controls (582.48,356.26) and (583.66,357.43) .. (583.66,358.88) .. controls (583.66,360.32) and (582.48,361.49) .. (581.03,361.49) .. controls (579.57,361.49) and (578.4,360.32) .. (578.4,358.88) -- cycle ;
\draw  [fill={rgb, 255:red, 6; green, 80; blue, 250 }  ,fill opacity=1 ] (554.1,365.52) .. controls (554.1,364.07) and (555.27,362.9) .. (556.73,362.9) .. controls (558.18,362.9) and (559.36,364.07) .. (559.36,365.52) .. controls (559.36,366.96) and (558.18,368.13) .. (556.73,368.13) .. controls (555.27,368.13) and (554.1,366.96) .. (554.1,365.52) -- cycle ;

\draw (518.5,330.79) node [anchor=north west][inner sep=0.75pt]  [font=\tiny]  {$2$};
\draw (559.55,340.49) node [anchor=north west][inner sep=0.75pt]  [font=\tiny]  {$3$};
\draw (584.72,333.91) node [anchor=north west][inner sep=0.75pt]  [font=\tiny]  {$4$};
\draw (561.42,298.56) node [anchor=north west][inner sep=0.75pt]  [font=\tiny,color={rgb, 255:red, 128; green, 128; blue, 128 }  ,opacity=1 ]  {$+$};
\draw (517.67,355.99) node [anchor=north west][inner sep=0.75pt]  [font=\tiny]  {$5$};
\draw (562.17,360.49) node [anchor=north west][inner sep=0.75pt]  [font=\tiny]  {$6$};
\draw (586.17,356.99) node [anchor=north west][inner sep=0.75pt]  [font=\tiny]  {$7$};
\draw (545.42,352.06) node [anchor=north west][inner sep=0.75pt]  [font=\tiny,color={rgb, 255:red, 128; green, 128; blue, 128 }  ,opacity=1 ]  {$+$};
\draw (587.94,344.34) node [anchor=north west][inner sep=0.75pt]  [font=\tiny,color={rgb, 255:red, 128; green, 128; blue, 128 }  ,opacity=1 ]  {$-$};
\draw (518.92,345.06) node [anchor=north west][inner sep=0.75pt]  [font=\tiny,color={rgb, 255:red, 128; green, 128; blue, 128 }  ,opacity=1 ]  {$+$};
\draw (549.04,279.27) node [anchor=north west][inner sep=0.75pt]  [font=\tiny]  {$u( t)$};
\draw (562.8,312.19) node [anchor=north west][inner sep=0.75pt]  [font=\tiny]  {$1$};
\draw (532.85,321.86) node [anchor=north west][inner sep=0.75pt]  [font=\tiny,color={rgb, 255:red, 128; green, 128; blue, 128 }  ,opacity=1 ]  {$+$};
\draw (546.7,328.44) node [anchor=north west][inner sep=0.75pt]  [font=\tiny,color={rgb, 255:red, 128; green, 128; blue, 128 }  ,opacity=1 ]  {$-$};
\draw (571.38,319.14) node [anchor=north west][inner sep=0.75pt]  [font=\tiny,color={rgb, 255:red, 128; green, 128; blue, 128 }  ,opacity=1 ]  {$-$};
\end{tikzpicture}
\caption{Snapshot 2}
\label{sm3}
\end{subfigure}
\caption{The sign of the sum of the products associated with nodes $\{5,7\}$ from the leader node is identical in Snapshot~1. The nodes $\{1,3,4,5,7\}$ are path-sign-matched in Snapshot~1, whereas nodes $\{2,6\}$ are path-sign-matched in Snapshot~2.}\label{total SM}
\end{figure}

\begin{lemma} \label{first snap}
The number of herdable nodes in the first snapshot equals the number of path-sign-matched nodes in the static version of the digraph.
\end{lemma}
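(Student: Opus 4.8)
The plan is to translate the statement into a sign condition on the controllability matrix $\mathcal{C}_1=\mathcal{C}_1(\mathcal{A}_1,\mathcal{B})$ of the first snapshot and then read that condition off the signed layered graph $\mathcal{G}_{s(1)}$. By Theorem~\ref{image of C-tmp} specialised to a single snapshot (equivalently, by Proposition~\ref{test for H} applied to $\mathcal{C}_1$), the nodes that can be driven simultaneously above an arbitrary positive threshold while the network runs under $(\mathcal{A}_1,\mathcal{B})$ on $[t_0,t_1)$ are exactly the indices $i$ for which some $v\in\mathcal{I}m(\mathcal{C}_1)$ satisfies $v_i>0$; a node $w$ fails to be herdable in the first snapshot precisely when its row in $\mathcal{C}_1$ is nonnegatively linearly dependent on the rows of the remaining candidate–herdable nodes, i.e.\ when there is a nonnegative $y\neq 0$, supported on those rows, with $\mathcal{C}_1^{\top}y=0$. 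The layer–column correspondence lemma stated earlier identifies column $d$ of $\mathcal{C}_1$ with layer $L_{d+1}$ of $\mathcal{G}_{s(1)}$: the $v$-th entry of that column is the signed sum $\sum_{r}\mathcal{W}^{r}_{(1,v)}$ of the products of the edge weights over all walks of length $d$ from the leader to $v$.

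First I would fix the realisation of Lemma~\ref{realiz}: for every path sign–matched node $s\in\mathcal{V}_{L_p}$ pick a walk from the leader to $s$ of sign $\mu_p$ (the matched sign of the layer) and give its terminal edge a weight exceeding $d\gg 0$, with all other weights in $(0,d)$. For a tree, each node of $\mathcal{G}_{s(1)}$ is reached by a unique walk, so the entry of $v$ in its own column is a single signed product whose sign is that of the walk; consequently $v$ is path sign–matched with sign $\mu_p$ iff that entry has sign $\mu_p$. For a general digraph I would instead argue by dominance: the designated matched walk contributes a term carrying the factor $>d$, while every competing walk of the same length is a product of at most $n$ factors and—because the matching edges can be chosen consistently, layer by layer, without forming a cycle—cannot accumulate enough matching factors to cancel it once $d$ is taken large relative to the finitely many realisation bounds. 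Either way, each path sign–matched node of layer $L_p$ has a column-$p$ entry of sign $\mu_p$ that dominates the entries generated by non-matched walks in that column. Establishing this dominance cleanly is the step I expect to be the main obstacle; I would handle it by induction on $p$ so that the matching-edge assignment stays acyclic and the bounds propagate.

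Using these signs, I would exhibit a strictly positive vector of $\mathcal{I}m(\mathcal{C}_1)$ on the matched set. Take the input combination that assigns to column $d$ the coefficient $\mu_{d+1}\,\varepsilon^{d}$ for a small $\varepsilon>0$; processing layers outward from $L_2$, the term that first turns the matched nodes of $L_{d+1}$ strictly positive is the column-$d$ term (sign $\mu_{d+1}$ at those coordinates), while columns of index $>d$ enter at higher order in $\varepsilon$ and are therefore too small to undo the strict positivity already secured at the inner layers. Since the relevant entries depend continuously on the realisation parameters and Lemma~\ref{realiz}'s realisation is admissible throughout the prescribed ranges, a routine continuity argument produces a single $v^{\star}\in\mathcal{I}m(\mathcal{C}_1)$ with $v^{\star}_{i}>0$ for every path sign–matched $i$. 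Hence at least as many nodes are herdable in the first snapshot as there are path sign–matched nodes.

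For the reverse inequality, let $w\in\mathcal{V}_{L_d}$ not be path sign–matched, so every length-$(d-1)$ walk from the leader to $w$ carries the sign $-\mu_d$. In the tree case the $w$-th row of $\mathcal{C}_1$ has a single nonzero entry, lying in column $d-1$ with sign $-\mu_d$, whereas any matched $v\in\mathcal{V}_{L_d}$ has its single nonzero entry in the same column with sign $\mu_d$; the two rows are nonnegatively linearly dependent, so by Proposition~\ref{test for H} no element of $\mathcal{I}m(\mathcal{C}_1)$ is positive on $\{v,w\}$ simultaneously and $w$ cannot be adjoined to the herdable set. For a general digraph the ``single nonzero entry'' is replaced by the dominance conclusion above: the column-$(d-1)$ entry of $w$ has sign $-\mu_d$, and pairing $w$'s row with a suitable nonnegative combination of the matched rows yields a nonnegative $y\neq 0$ with $\mathcal{C}_1^{\top}y=0$. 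Combining the two inequalities gives equality, so the number of nodes herdable in the first snapshot equals the number of path sign–matched nodes of the static digraph; restricting to the tree graphs that motivate the paper, the whole argument reduces to the elementary single-entry bookkeeping.
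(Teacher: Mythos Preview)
The paper states Lemma~\ref{first snap} without proof, so there is no argument in the text to compare yours against. Your approach---reading the row signs of $\mathcal{C}_1$ off the signed layered graph, forcing dominance via the realisation of Lemma~\ref{realiz}, and then invoking Proposition~\ref{test for H}---is the natural one and aligns with the heuristics the paper deploys elsewhere (in the proof of Theorem~\ref{min snap} and the examples of Figs.~\ref{total SM} and~\ref{last full}).

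Two caveats. First, your reverse inequality only shows that the matched set is \emph{maximal} among simultaneously herdable subsets: a non-matched $w$ cannot be adjoined to it. That does not yet bound the \emph{cardinality} of an arbitrary herdable set, so ``combining the two inequalities'' is premature. To close the gap you need the additional observation that any simultaneously herdable set must, within each layer $L_d$, consist of nodes whose column-$(d{-}1)$ entries share a sign, and that the matching sign $\mu_d$ is (implicitly) taken to be the majority sign in $L_d$; otherwise a competing herdable set built on the opposite sign in some layer could in principle be larger. This is elementary in the tree case but must be stated. Second, for general digraphs the dominance step (matching edges chosen ``consistently, layer by layer, without forming a cycle'') is only sketched; you correctly flag it as the main obstacle, and the paper does not address it either. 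Some of the looseness here is inherited from the paper's own informal definitions of ``herdable nodes'' and of the layer sign $\mu_d$ rather than introduced by your argument.
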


    


\begin{theorem}\label{min snap}
Consider the digraph \(\mathcal{G}_T(\mathcal{A}_i,\mathcal{B})\) associated with the linear temporally switching network described by \eqref{sys1}. Let \(\mathcal{G}_{s(i)}\) be the signed layered graph associated with \((\mathcal{A}_i,\mathcal{B})\) on the interval \([t_{i-1},t_i)\). Suppose that the system in \eqref{sys1} is not \(\mathcal{SS}\) herdable during the first snapshot (i.e., in its static form). Then, the system becomes \(\mathcal{SS}\) herdable if and only if every node that is not path-sign-matched in snapshot \(i\) becomes path-sign-matched in a subsequent snapshot according to Definition~\ref{path sign}.
\end{theorem}
 
\begin{proof} Sufficiency :Let us consider a digraph that is not $\mathcal{SS}$ herdable in its $i^{th}$ snapshot (static form). This loss of herdability arises due to the presence of signed dilation and layered dilation in the signed layered representation of the digraph. These dilations imply that the nodes belonging to the dilation sets produce rows in the controllability matrix that are non–negatively linearly dependent, as illustrated below.
\renewcommand{\arraystretch}{1.2}
{\small
\[
\mathcal{C}_{(i,j)} =
\begin{bNiceMatrix}
\mathcal{C}_{11} & 0 & \cdots & 0 & \cdots & 0 \\
. & . & . & . & . & . \\
\rowcolor{gray!15}
0 & \mathcal{C}_{p2} & \cdots & \mathcal{C}_{pi} & \cdots & \mathcal{C}_{pn} \\
.& . & . & . & . & . \\
\rowcolor{gray!15}
0 & -\alpha \mathcal{C}_{p2} & \cdots & -\alpha \mathcal{C}_{pi} & \cdots & -\alpha \mathcal{C}_{pn} \\
. & . & \ddots & \ddots & . & . \\
0 & \mathcal{C}_{n2} & \cdots & \mathcal{C}_{ni} & \cdots & \mathcal{C}_{nn}
\end{bNiceMatrix},
\qquad \alpha > 0; \alpha \in \mathbb{R}
\]
}

Without loss of generality, suppose that the first \(q\) nodes are path-sign-matched in the \(i^{\text{th}}\) snapshot. By Lemma~\ref{first snap}, these \(q\) nodes are herdable in the \(i^{\text{th}}\) snapshot, whereas the remaining \(n-q\) nodes are not herdable. Consequently, at least \(n-q\) rows of the controllability matrix are nonnegatively linearly dependent on the rows corresponding to the herdable nodes. 

Now suppose that the remaining \(n-q\) nodes become path-sign-matched in subsequent snapshots. Then, by Definition~\ref{path sign}, the corresponding entries in the columns associated with those snapshots attain different magnitudes under a different parametric realization while preserving the sign pattern. As a result, the corresponding rows of the temporal controllability matrix \(\mathcal{C}_T\) are no longer non-negatively linearly dependent. Therefore, by Proposition~\ref{test for H}, the temporally switching network with fixed topology and structurally similar system matrices is \(\mathcal{SS}\) herdable.

Necessity: Consider a temporally switching digraph containing a signed dilation or layer dilation that is \(\mathcal{SS}\) herdable. From Lemma~\ref{SLG} and Sections~\ref{SD para} and \ref{Lyr para}, the entries corresponding to the signed and layer dilations induce non-negatively linearly dependent rows in the controllability matrix. Consequently, during the first snapshot, only a subset of the nodes in the layer associated with these dilations are herdable, whereas the nodes corresponding to such dependencies are not herdable.

By Proposition~\ref{test for H}, a digraph is herdable only if no two rows of the controllability matrix associated with the temporally switching network are non-negatively linearly dependent. Furthermore, as shown in Section~\ref{Sec:3}, a temporally switching digraph with identical edge weights across all snapshots does not improve herdability. Therefore, in order to eliminate the nonnegative linear dependence among the rows, there must exist a different parametric realization of \((\mathcal{A}_i,\mathcal{B})\) such that the unmatched nodes in the \(i^{\text{th}}\) snapshot become path-sign-matched in subsequent snapshots. This results in distinct realizations in the snapshots following the \(i^{\text{th}}\) snapshot, with the system evolution governed by \(\prod_{k=i+1}^{p} e^{\mathcal{A}_kT_k},\). Hence, the proof follows.\hfill \qed

\end{proof}
  \begin{proposition}\label{2 snap}
Consider the digraph \(\mathcal{G}_T(\mathcal{A}_i,\mathcal{B})\) associated with the linear temporally switching network described by \eqref{sys1}. Let \(\mathcal{G}_{s(1)}\) and \(\mathcal{G}_{s(2)}\) denote the signed layered graphs associated with \((\mathcal{A}_1,\mathcal{B})\) and \((\mathcal{A}_2,\mathcal{B})\) over the intervals \([t_0,t_1)\) and \([t_1,t_2]\), respectively. The system in \eqref{sys1} is \(\mathcal{SS}\) herdable within two snapshots if and only if the second snapshot corresponds to a distinct realization belonging to the same class of structured matrices and possessing the same topology and sign pattern as the first snapshot.
\end{proposition}
\begin{proof}
 The proof follows along the same lines as that of Theorem~\ref{min snap}.
\end{proof}
\begin{eg}
Consider the digraph shown in Fig.~\ref{last full}(\subref{final1}). Without loss of generality, assume that all edge weights in the first snapshot are identical, so that every nonzero entry of $\mathcal{A}_1$ shares the sign pattern indicated by the digraph. Since node~5 has both a positive and a negative walk of length~3, any parametric realization of $\mathcal{A}_1$ yields a zero row for node~5 in the portion of the controllability matrix corresponding to the first snapshot. 

However, by adopting a different realization, the sign-matched nodes become herdable under that realization. In particular, since both walks contributing to the positive and negative paths of node~5 are matched, selecting either one and applying a different realization ensures that the node becomes herdable in the second snapshot. Consequently, all nodes that are not path-sign–matched in the first snapshot become matched in the second. Therefore, the system is $\mathcal{SS}$ herdable by the end of the second snapshot.

\begin{figure}[ht]
  \centering
   \begin{subfigure}{0.18\textwidth} 
    \centering
\tikzset{every picture/.style={line width=0.55pt}} 

\begin{tikzpicture}[x=0.75pt,y=0.75pt,yscale=-1,xscale=1]

\draw    (128.71,252.43) -- (108.18,281.39) ;
\draw [shift={(116.94,269.03)}, rotate = 305.33] [fill={rgb, 255:red, 0; green, 0; blue, 0 }  ][line width=0.08]  [draw opacity=0] (7.14,-3.43) -- (0,0) -- (7.14,3.43) -- (4.74,0) -- cycle    ;
\draw  [fill={rgb, 255:red, 6; green, 80; blue, 250 }  ,fill opacity=1 ] (126.08,252.43) .. controls (126.08,250.98) and (127.26,249.81) .. (128.71,249.81) .. controls (130.17,249.81) and (131.35,250.98) .. (131.35,252.43) .. controls (131.35,253.87) and (130.17,255.04) .. (128.71,255.04) .. controls (127.26,255.04) and (126.08,253.87) .. (126.08,252.43) -- cycle ;
\draw [color={rgb, 255:red, 161; green, 27; blue, 43 }  ,draw opacity=1 ]   (107.42,200.75) .. controls (109.08,202.42) and (109.08,204.08) .. (107.41,205.75) -- (107.41,207.23) -- (107.39,215.23) ;
\draw [shift={(107.39,218.23)}, rotate = 270.1] [fill={rgb, 255:red, 161; green, 27; blue, 43 }  ,fill opacity=1 ][line width=0.08]  [draw opacity=0] (7.14,-3.43) -- (0,0) -- (7.14,3.43) -- (4.74,0) -- cycle    ;
\draw  [fill={rgb, 255:red, 239; green, 43; blue, 67 }  ,fill opacity=1 ] (104.79,198.13) .. controls (104.79,196.69) and (105.97,195.51) .. (107.42,195.51) .. controls (108.87,195.51) and (110.05,196.69) .. (110.05,198.13) .. controls (110.05,199.58) and (108.87,200.75) .. (107.42,200.75) .. controls (105.97,200.75) and (104.79,199.58) .. (104.79,198.13) -- cycle ;
\draw  [color={rgb, 255:red, 64; green, 95; blue, 17 }  ,draw opacity=1 ][fill={rgb, 255:red, 85; green, 211; blue, 33 }  ,fill opacity=1 ] (107.39,219.03) .. controls (108.86,219) and (110.08,220.25) .. (110.11,221.83) .. controls (110.15,223.4) and (108.98,224.7) .. (107.51,224.73) .. controls (106.03,224.76) and (104.81,223.51) .. (104.78,221.94) .. controls (104.75,220.37) and (105.92,219.07) .. (107.39,219.03) -- cycle ;
\draw  [fill={rgb, 255:red, 6; green, 80; blue, 250 }  ,fill opacity=1 ] (87.2,253.06) .. controls (87.2,251.61) and (88.38,250.44) .. (89.83,250.44) .. controls (91.29,250.44) and (92.47,251.61) .. (92.47,253.06) .. controls (92.47,254.5) and (91.29,255.68) .. (89.83,255.68) .. controls (88.38,255.68) and (87.2,254.5) .. (87.2,253.06) -- cycle ;
\draw  [fill={rgb, 255:red, 6; green, 80; blue, 250 }  ,fill opacity=1 ] (67.17,281.02) .. controls (67.17,279.57) and (68.35,278.4) .. (69.8,278.4) .. controls (71.25,278.4) and (72.43,279.57) .. (72.43,281.02) .. controls (72.43,282.46) and (71.25,283.63) .. (69.8,283.63) .. controls (68.35,283.63) and (67.17,282.46) .. (67.17,281.02) -- cycle ;
\draw    (88.27,255.42) -- (106.18,279.77) ;
\draw [shift={(98.76,269.69)}, rotate = 233.66] [fill={rgb, 255:red, 0; green, 0; blue, 0 }  ][line width=0.08]  [draw opacity=0] (7.14,-3.43) -- (0,0) -- (7.14,3.43) -- (4.74,0) -- cycle    ;
\draw    (88.27,255.42) -- (70.8,278.4) ;
\draw [shift={(77.96,268.98)}, rotate = 307.24] [fill={rgb, 255:red, 0; green, 0; blue, 0 }  ][line width=0.08]  [draw opacity=0] (7.14,-3.43) -- (0,0) -- (7.14,3.43) -- (4.74,0) -- cycle    ;
\draw    (129.71,255.04) -- (149.32,280.06) ;
\draw [shift={(141.12,269.6)}, rotate = 231.91] [fill={rgb, 255:red, 0; green, 0; blue, 0 }  ][line width=0.08]  [draw opacity=0] (7.14,-3.43) -- (0,0) -- (7.14,3.43) -- (4.74,0) -- cycle    ;
\draw    (107.31,225.24) -- (127.61,250.34) ;
\draw [shift={(119.1,239.82)}, rotate = 231.04] [fill={rgb, 255:red, 0; green, 0; blue, 0 }  ][line width=0.08]  [draw opacity=0] (7.14,-3.43) -- (0,0) -- (7.14,3.43) -- (4.74,0) -- cycle    ;
\draw    (107.31,225.24) -- (90.47,250.06) ;
\draw [shift={(97.43,239.8)}, rotate = 304.17] [fill={rgb, 255:red, 0; green, 0; blue, 0 }  ][line width=0.08]  [draw opacity=0] (7.14,-3.43) -- (0,0) -- (7.14,3.43) -- (4.74,0) -- cycle    ;
\draw  [fill={rgb, 255:red, 6; green, 80; blue, 250 }  ,fill opacity=1 ] (105.55,281.39) .. controls (105.55,279.95) and (106.73,278.77) .. (108.18,278.77) .. controls (109.63,278.77) and (110.81,279.95) .. (110.81,281.39) .. controls (110.81,282.84) and (109.63,284.01) .. (108.18,284.01) .. controls (106.73,284.01) and (105.55,282.84) .. (105.55,281.39) -- cycle ;
\draw  [fill={rgb, 255:red, 6; green, 80; blue, 250 }  ,fill opacity=1 ] (147.69,282.68) .. controls (147.69,281.23) and (148.87,280.06) .. (150.32,280.06) .. controls (151.78,280.06) and (152.96,281.23) .. (152.96,282.68) .. controls (152.96,284.12) and (151.78,285.29) .. (150.32,285.29) .. controls (148.87,285.29) and (147.69,284.12) .. (147.69,282.68) -- cycle ;

\draw (112.54,215.03) node [anchor=north west][inner sep=0.75pt]  [font=\tiny]  {$1$};
\draw (76.94,248.49) node [anchor=north west][inner sep=0.75pt]  [font=\tiny]  {$2$};
\draw (137.32,248.19) node [anchor=north west][inner sep=0.75pt]  [font=\tiny]  {$3$};
\draw (55.16,278.95) node [anchor=north west][inner sep=0.75pt]  [font=\tiny]  {$4$};
\draw (68.6,256.06) node [anchor=north west][inner sep=0.75pt]  [font=\scriptsize]  {$+$};
\draw (97.26,254.72) node [anchor=north west][inner sep=0.75pt]  [font=\scriptsize]  {$-$};
\draw (117.7,279.55) node [anchor=north west][inner sep=0.75pt]  [font=\tiny]  {$5$};
\draw (139.7,279.31) node [anchor=north west][inner sep=0.75pt]  [font=\tiny]  {$6$};
\draw (94.61,201.85) node [anchor=north west][inner sep=0.75pt]  [font=\scriptsize]  {$+$};
\draw (87.19,228.44) node [anchor=north west][inner sep=0.75pt]  [font=\scriptsize]  {$+$};
\draw (125.23,230.13) node [anchor=north west][inner sep=0.75pt]  [font=\scriptsize]  {$-$};
\draw (110.19,253.94) node [anchor=north west][inner sep=0.75pt]  [font=\scriptsize]  {$+$};
\draw (145.23,258.63) node [anchor=north west][inner sep=0.75pt]  [font=\scriptsize]  {$-$};
\draw (100.44,183.68) node [anchor=north west][inner sep=0.75pt]  [font=\tiny]  {$u( t)$};

\end{tikzpicture}
\caption{A digraph $\mathcal{G}_3(\mathcal{A,B})$}
\label{final1}
\end{subfigure}
 \begin{subfigure}{0.14\textwidth} 
    \centering

\tikzset{every picture/.style={line width=0.65pt}} 

\begin{tikzpicture}[x=0.75pt,y=0.75pt,yscale=-1,xscale=1]

\draw  [draw opacity=0][fill={rgb, 255:red, 74; green, 74; blue, 74 }  ,fill opacity=0.18 ] (343.23,293.15) -- (434.06,293.15) -- (434.06,304.82) -- (343.23,304.82) -- cycle ;
\draw  [draw opacity=0][fill={rgb, 255:red, 74; green, 74; blue, 74 }  ,fill opacity=0.18 ] (343.23,323.15) -- (434.06,323.15) -- (434.06,334.82) -- (343.23,334.82) -- cycle ;
\draw  [draw opacity=0][fill={rgb, 255:red, 74; green, 74; blue, 74 }  ,fill opacity=0.18 ] (344.23,351.65) -- (435.06,351.65) -- (435.06,363.32) -- (344.23,363.32) -- cycle ;
\draw [color={rgb, 255:red, 0; green, 0; blue, 0 }  ,draw opacity=0.28 ][fill={rgb, 255:red, 0; green, 0; blue, 0 }  ,fill opacity=0.18 ] [dash pattern={on 0.84pt off 2.51pt}]  (371.47,332.19) -- (371.5,357.33) ;
\draw [shift={(371.49,347.36)}, rotate = 269.92] [fill={rgb, 255:red, 0; green, 0; blue, 0 }  ,fill opacity=0.28 ][line width=0.08]  [draw opacity=0] (7.14,-3.43) -- (0,0) -- (7.14,3.43) -- (4.74,0) -- cycle    ;
\draw [color={rgb, 255:red, 0; green, 0; blue, 0 }  ,draw opacity=0.28 ][fill={rgb, 255:red, 0; green, 0; blue, 0 }  ,fill opacity=0.18 ] [dash pattern={on 0.84pt off 2.51pt}]  (409.91,330.19) -- (409.5,356.33) ;
\draw [shift={(409.67,345.86)}, rotate = 270.91] [fill={rgb, 255:red, 0; green, 0; blue, 0 }  ,fill opacity=0.28 ][line width=0.08]  [draw opacity=0] (7.14,-3.43) -- (0,0) -- (7.14,3.43) -- (4.74,0) -- cycle    ;
\draw  [fill={rgb, 255:red, 6; green, 80; blue, 250 }  ,fill opacity=1 ] (407.28,329.53) .. controls (407.28,328.08) and (408.46,326.91) .. (409.91,326.91) .. controls (411.37,326.91) and (412.55,328.08) .. (412.55,329.53) .. controls (412.55,330.97) and (411.37,332.14) .. (409.91,332.14) .. controls (408.46,332.14) and (407.28,330.97) .. (407.28,329.53) -- cycle ;
\draw [color={rgb, 255:red, 161; green, 27; blue, 43 }  ,draw opacity=1 ]   (388.62,277.85) .. controls (390.28,279.52) and (390.28,281.18) .. (388.61,282.85) -- (388.61,284.33) -- (388.59,292.33) ;
\draw [shift={(388.59,295.33)}, rotate = 270.1] [fill={rgb, 255:red, 161; green, 27; blue, 43 }  ,fill opacity=1 ][line width=0.08]  [draw opacity=0] (7.14,-3.43) -- (0,0) -- (7.14,3.43) -- (4.74,0) -- cycle    ;
\draw  [fill={rgb, 255:red, 239; green, 43; blue, 67 }  ,fill opacity=1 ] (385.99,275.23) .. controls (385.99,273.79) and (387.17,272.61) .. (388.62,272.61) .. controls (390.07,272.61) and (391.25,273.79) .. (391.25,275.23) .. controls (391.25,276.68) and (390.07,277.85) .. (388.62,277.85) .. controls (387.17,277.85) and (385.99,276.68) .. (385.99,275.23) -- cycle ;
\draw  [color={rgb, 255:red, 0; green, 0; blue, 0 }  ,draw opacity=1 ][fill={rgb, 255:red, 46; green, 133; blue, 10 }  ,fill opacity=1 ] (388.59,296.13) .. controls (390.06,296.1) and (391.28,297.35) .. (391.31,298.93) .. controls (391.35,300.5) and (390.18,301.8) .. (388.71,301.83) .. controls (387.23,301.86) and (386.01,300.61) .. (385.98,299.04) .. controls (385.95,297.47) and (387.12,296.17) .. (388.59,296.13) -- cycle ;
\draw  [fill={rgb, 255:red, 6; green, 80; blue, 250 }  ,fill opacity=1 ] (368.4,330.16) .. controls (368.4,328.71) and (369.58,327.54) .. (371.03,327.54) .. controls (372.49,327.54) and (373.67,328.71) .. (373.67,330.16) .. controls (373.67,331.6) and (372.49,332.78) .. (371.03,332.78) .. controls (369.58,332.78) and (368.4,331.6) .. (368.4,330.16) -- cycle ;
\draw  [fill={rgb, 255:red, 6; green, 80; blue, 250 }  ,fill opacity=1 ] (348.37,358.12) .. controls (348.37,356.67) and (349.55,355.5) .. (351,355.5) .. controls (352.45,355.5) and (353.63,356.67) .. (353.63,358.12) .. controls (353.63,359.56) and (352.45,360.73) .. (351,360.73) .. controls (349.55,360.73) and (348.37,359.56) .. (348.37,358.12) -- cycle ;
\draw [color={rgb, 255:red, 146; green, 17; blue, 33 }  ,draw opacity=1 ]   (369.47,332.52) -- (352,355.5) ;
\draw [shift={(359.16,346.08)}, rotate = 307.24] [fill={rgb, 255:red, 146; green, 17; blue, 33 }  ,fill opacity=1 ][line width=0.08]  [draw opacity=0] (7.14,-3.43) -- (0,0) -- (7.14,3.43) -- (4.74,0) -- cycle    ;
\draw [color={rgb, 255:red, 146; green, 17; blue, 33 }  ,draw opacity=1 ]   (410.91,332.14) -- (429.17,356.17) ;
\draw [shift={(421.61,346.23)}, rotate = 232.77] [fill={rgb, 255:red, 146; green, 17; blue, 33 }  ,fill opacity=1 ][line width=0.08]  [draw opacity=0] (7.14,-3.43) -- (0,0) -- (7.14,3.43) -- (4.74,0) -- cycle    ;
\draw [color={rgb, 255:red, 0; green, 0; blue, 0 }  ,draw opacity=0.28 ][fill={rgb, 255:red, 0; green, 0; blue, 0 }  ,fill opacity=0.18 ] [dash pattern={on 0.84pt off 2.51pt}]  (388.51,303.01) -- (408.81,328.11) ;
\draw [shift={(400.3,317.58)}, rotate = 231.04] [fill={rgb, 255:red, 0; green, 0; blue, 0 }  ,fill opacity=0.28 ][line width=0.08]  [draw opacity=0] (7.14,-3.43) -- (0,0) -- (7.14,3.43) -- (4.74,0) -- cycle    ;
\draw [color={rgb, 255:red, 146; green, 17; blue, 33 }  ,draw opacity=1 ]   (388.51,302.34) -- (371.67,327.16) ;
\draw [shift={(378.63,316.9)}, rotate = 304.17] [fill={rgb, 255:red, 146; green, 17; blue, 33 }  ,fill opacity=1 ][line width=0.08]  [draw opacity=0] (7.14,-3.43) -- (0,0) -- (7.14,3.43) -- (4.74,0) -- cycle    ;
\draw  [fill={rgb, 255:red, 6; green, 80; blue, 250 }  ,fill opacity=1 ] (406.87,358.28) .. controls (406.87,356.84) and (408.05,355.67) .. (409.5,355.67) .. controls (410.95,355.67) and (412.13,356.84) .. (412.13,358.28) .. controls (412.13,359.73) and (410.95,360.9) .. (409.5,360.9) .. controls (408.05,360.9) and (406.87,359.73) .. (406.87,358.28) -- cycle ;
\draw  [fill={rgb, 255:red, 6; green, 80; blue, 250 }  ,fill opacity=1 ] (427.89,358.78) .. controls (427.89,357.33) and (429.07,356.16) .. (430.52,356.16) .. controls (431.98,356.16) and (433.16,357.33) .. (433.16,358.78) .. controls (433.16,360.22) and (431.98,361.39) .. (430.52,361.39) .. controls (429.07,361.39) and (427.89,360.22) .. (427.89,358.78) -- cycle ;
\draw  [fill={rgb, 255:red, 6; green, 80; blue, 250 }  ,fill opacity=1 ] (368.87,359.28) .. controls (368.87,357.84) and (370.05,356.67) .. (371.5,356.67) .. controls (372.95,356.67) and (374.13,357.84) .. (374.13,359.28) .. controls (374.13,360.73) and (372.95,361.9) .. (371.5,361.9) .. controls (370.05,361.9) and (368.87,360.73) .. (368.87,359.28) -- cycle ;

\draw (393.74,292.13) node [anchor=north west][inner sep=0.75pt]  [font=\tiny]  {$1$};
\draw (358.14,325.59) node [anchor=north west][inner sep=0.75pt]  [font=\tiny]  {$2$};
\draw (418.52,325.29) node [anchor=north west][inner sep=0.75pt]  [font=\tiny]  {$3$};
\draw (336.36,356.05) node [anchor=north west][inner sep=0.75pt]  [font=\tiny]  {$4$};
\draw (349.8,333.16) node [anchor=north west][inner sep=0.75pt]  [font=\scriptsize]  {$+$};
\draw (378.46,334.82) node [anchor=north west][inner sep=0.75pt]  [font=\scriptsize]  {$-$};
\draw (374.9,356.15) node [anchor=north west][inner sep=0.75pt]  [font=\tiny]  {$5$};
\draw (437.9,357.41) node [anchor=north west][inner sep=0.75pt]  [font=\tiny]  {$6$};
\draw (375.81,278.95) node [anchor=north west][inner sep=0.75pt]  [font=\scriptsize]  {$+$};
\draw (368.39,305.54) node [anchor=north west][inner sep=0.75pt]  [font=\scriptsize]  {$+$};
\draw (406.43,307.23) node [anchor=north west][inner sep=0.75pt]  [font=\scriptsize]  {$-$};
\draw (395.39,334.54) node [anchor=north west][inner sep=0.75pt]  [font=\scriptsize]  {$+$};
\draw (398.9,355.65) node [anchor=north west][inner sep=0.75pt]  [font=\tiny]  {$5$};
\draw (427.43,336.23) node [anchor=north west][inner sep=0.75pt]  [font=\scriptsize]  {$-$};
\draw (379.64,259.67) node [anchor=north west][inner sep=0.75pt]  [font=\tiny]  {$u( t)$};

\end{tikzpicture}
\caption{Snapshot 1}
\label{final2}
\end{subfigure}
   \begin{subfigure}{0.14\textwidth} 
    \centering

\tikzset{every picture/.style={line width=0.65pt}} 

\begin{tikzpicture}[x=0.75pt,y=0.75pt,yscale=-1,xscale=1]

\draw  [draw opacity=0][fill={rgb, 255:red, 74; green, 74; blue, 74 }  ,fill opacity=0.18 ] (530.53,148.15) -- (621.36,148.15) -- (621.36,159.82) -- (530.53,159.82) -- cycle ;
\draw  [draw opacity=0][fill={rgb, 255:red, 74; green, 74; blue, 74 }  ,fill opacity=0.18 ] (530.53,178.15) -- (621.36,178.15) -- (621.36,189.82) -- (530.53,189.82) -- cycle ;
\draw  [draw opacity=0][fill={rgb, 255:red, 74; green, 74; blue, 74 }  ,fill opacity=0.18 ] (531.53,206.65) -- (622.36,206.65) -- (622.36,218.32) -- (531.53,218.32) -- cycle ;
\draw [color={rgb, 255:red, 139; green, 27; blue, 42 }  ,draw opacity=1 ]   (558.77,186.52) -- (558.8,211.67) ;
\draw [shift={(558.79,201.69)}, rotate = 269.92] [fill={rgb, 255:red, 139; green, 27; blue, 42 }  ,fill opacity=1 ][line width=0.08]  [draw opacity=0] (7.14,-3.43) -- (0,0) -- (7.14,3.43) -- (4.74,0) -- cycle    ;
\draw [color={rgb, 255:red, 139; green, 27; blue, 42 }  ,draw opacity=1 ]   (597.21,184.53) -- (596.8,210.67) ;
\draw [shift={(596.97,200.2)}, rotate = 270.91] [fill={rgb, 255:red, 139; green, 27; blue, 42 }  ,fill opacity=1 ][line width=0.08]  [draw opacity=0] (7.14,-3.43) -- (0,0) -- (7.14,3.43) -- (4.74,0) -- cycle    ;
\draw  [fill={rgb, 255:red, 6; green, 80; blue, 250 }  ,fill opacity=1 ] (594.58,184.53) .. controls (594.58,183.08) and (595.76,181.91) .. (597.21,181.91) .. controls (598.67,181.91) and (599.85,183.08) .. (599.85,184.53) .. controls (599.85,185.97) and (598.67,187.14) .. (597.21,187.14) .. controls (595.76,187.14) and (594.58,185.97) .. (594.58,184.53) -- cycle ;
\draw [color={rgb, 255:red, 161; green, 27; blue, 43 }  ,draw opacity=1 ]   (575.92,132.85) .. controls (577.58,134.52) and (577.58,136.18) .. (575.91,137.85) -- (575.91,139.33) -- (575.89,147.33) ;
\draw [shift={(575.89,150.33)}, rotate = 270.1] [fill={rgb, 255:red, 161; green, 27; blue, 43 }  ,fill opacity=1 ][line width=0.08]  [draw opacity=0] (7.14,-3.43) -- (0,0) -- (7.14,3.43) -- (4.74,0) -- cycle    ;
\draw  [fill={rgb, 255:red, 239; green, 43; blue, 67 }  ,fill opacity=1 ] (573.29,130.23) .. controls (573.29,128.79) and (574.47,127.61) .. (575.92,127.61) .. controls (577.37,127.61) and (578.55,128.79) .. (578.55,130.23) .. controls (578.55,131.68) and (577.37,132.85) .. (575.92,132.85) .. controls (574.47,132.85) and (573.29,131.68) .. (573.29,130.23) -- cycle ;
\draw  [color={rgb, 255:red, 0; green, 0; blue, 0 }  ,draw opacity=1 ][fill={rgb, 255:red, 46; green, 133; blue, 10 }  ,fill opacity=1 ] (575.89,151.13) .. controls (577.36,151.1) and (578.58,152.35) .. (578.61,153.93) .. controls (578.65,155.5) and (577.48,156.8) .. (576.01,156.83) .. controls (574.53,156.86) and (573.31,155.61) .. (573.28,154.04) .. controls (573.25,152.47) and (574.42,151.17) .. (575.89,151.13) -- cycle ;
\draw  [fill={rgb, 255:red, 6; green, 80; blue, 250 }  ,fill opacity=1 ] (555.7,185.16) .. controls (555.7,183.71) and (556.88,182.54) .. (558.33,182.54) .. controls (559.79,182.54) and (560.97,183.71) .. (560.97,185.16) .. controls (560.97,186.6) and (559.79,187.78) .. (558.33,187.78) .. controls (556.88,187.78) and (555.7,186.6) .. (555.7,185.16) -- cycle ;
\draw  [fill={rgb, 255:red, 6; green, 80; blue, 250 }  ,fill opacity=1 ] (535.67,213.12) .. controls (535.67,211.67) and (536.85,210.5) .. (538.3,210.5) .. controls (539.75,210.5) and (540.93,211.67) .. (540.93,213.12) .. controls (540.93,214.56) and (539.75,215.73) .. (538.3,215.73) .. controls (536.85,215.73) and (535.67,214.56) .. (535.67,213.12) -- cycle ;
\draw [color={rgb, 255:red, 0; green, 0; blue, 0 }  ,draw opacity=0.31 ]   (556.77,187.52) -- (539.3,210.5) ;
\draw [shift={(546.46,201.08)}, rotate = 307.24] [fill={rgb, 255:red, 0; green, 0; blue, 0 }  ,fill opacity=0.31 ][line width=0.08]  [draw opacity=0] (7.14,-3.43) -- (0,0) -- (7.14,3.43) -- (4.74,0) -- cycle    ;
\draw [color={rgb, 255:red, 0; green, 0; blue, 0 }  ,draw opacity=0.31 ]   (598.21,187.14) -- (616.47,211.17) ;
\draw [shift={(608.91,201.23)}, rotate = 232.77] [fill={rgb, 255:red, 0; green, 0; blue, 0 }  ,fill opacity=0.31 ][line width=0.08]  [draw opacity=0] (7.14,-3.43) -- (0,0) -- (7.14,3.43) -- (4.74,0) -- cycle    ;
\draw [color={rgb, 255:red, 139; green, 27; blue, 42 }  ,draw opacity=1 ]   (575.81,157.34) -- (596.11,182.44) ;
\draw [shift={(587.6,171.92)}, rotate = 231.04] [fill={rgb, 255:red, 139; green, 27; blue, 42 }  ,fill opacity=1 ][line width=0.08]  [draw opacity=0] (7.14,-3.43) -- (0,0) -- (7.14,3.43) -- (4.74,0) -- cycle    ;
\draw [color={rgb, 255:red, 0; green, 0; blue, 0 }  ,draw opacity=0.31 ]   (575.81,157.34) -- (558.97,182.16) ;
\draw [shift={(565.93,171.9)}, rotate = 304.17] [fill={rgb, 255:red, 0; green, 0; blue, 0 }  ,fill opacity=0.31 ][line width=0.08]  [draw opacity=0] (7.14,-3.43) -- (0,0) -- (7.14,3.43) -- (4.74,0) -- cycle    ;
\draw  [fill={rgb, 255:red, 6; green, 80; blue, 250 }  ,fill opacity=1 ] (594.17,213.28) .. controls (594.17,211.84) and (595.35,210.67) .. (596.8,210.67) .. controls (598.25,210.67) and (599.43,211.84) .. (599.43,213.28) .. controls (599.43,214.73) and (598.25,215.9) .. (596.8,215.9) .. controls (595.35,215.9) and (594.17,214.73) .. (594.17,213.28) -- cycle ;
\draw  [fill={rgb, 255:red, 6; green, 80; blue, 250 }  ,fill opacity=1 ] (615.19,213.78) .. controls (615.19,212.33) and (616.37,211.16) .. (617.82,211.16) .. controls (619.28,211.16) and (620.46,212.33) .. (620.46,213.78) .. controls (620.46,215.22) and (619.28,216.39) .. (617.82,216.39) .. controls (616.37,216.39) and (615.19,215.22) .. (615.19,213.78) -- cycle ;
\draw  [fill={rgb, 255:red, 6; green, 80; blue, 250 }  ,fill opacity=1 ] (556.17,214.28) .. controls (556.17,212.84) and (557.35,211.67) .. (558.8,211.67) .. controls (560.25,211.67) and (561.43,212.84) .. (561.43,214.28) .. controls (561.43,215.73) and (560.25,216.9) .. (558.8,216.9) .. controls (557.35,216.9) and (556.17,215.73) .. (556.17,214.28) -- cycle ;

\draw (581.04,147.13) node [anchor=north west][inner sep=0.75pt]  [font=\tiny]  {$1$};
\draw (545.44,180.59) node [anchor=north west][inner sep=0.75pt]  [font=\tiny]  {$2$};
\draw (605.82,181.29) node [anchor=north west][inner sep=0.75pt]  [font=\tiny]  {$3$};
\draw (523.66,211.05) node [anchor=north west][inner sep=0.75pt]  [font=\tiny]  {$4$};
\draw (537.1,188.16) node [anchor=north west][inner sep=0.75pt]  [font=\scriptsize]  {$+$};
\draw (565.76,189.82) node [anchor=north west][inner sep=0.75pt]  [font=\scriptsize]  {$-$};
\draw (562.2,211.15) node [anchor=north west][inner sep=0.75pt]  [font=\tiny]  {$5$};
\draw (624.36,210.05) node [anchor=north west][inner sep=0.75pt]  [font=\tiny]  {$6$};
\draw (563.11,133.95) node [anchor=north west][inner sep=0.75pt]  [font=\scriptsize]  {$+$};
\draw (555.69,160.54) node [anchor=north west][inner sep=0.75pt]  [font=\scriptsize]  {$+$};
\draw (593.73,162.23) node [anchor=north west][inner sep=0.75pt]  [font=\scriptsize]  {$-$};
\draw (582.69,189.54) node [anchor=north west][inner sep=0.75pt]  [font=\scriptsize]  {$+$};
\draw (586.2,210.65) node [anchor=north west][inner sep=0.75pt]  [font=\tiny]  {$5$};
\draw (615.23,191.73) node [anchor=north west][inner sep=0.75pt]  [font=\scriptsize]  {$-$};
\draw (569.14,117.17) node [anchor=north west][inner sep=0.75pt]  [font=\tiny]  {$u( t)$};

\end{tikzpicture}
\caption{Snapshot 2}
\label{final3}
\end{subfigure}
\caption{The temporally switching digraph becomes $\mathcal{SS}$ herdable within two snapshots, as all nodes are path-sign–matched across the two snapshots.}
\label{last full}
\end{figure}
  \end{eg}

\section{CONCLUSION AND FUTURE DIRECTION}
\vspace{-2mm}
This paper investigates the $\mathcal{SS}$ herdability of temporally switching networks with fixed topology, where the underlying structure is directed. We show that such digraphs can achieve complete $\mathcal{SS}$ herdability even in the presence of signed and layered dilations conditions under which static networks fail to be herdable. By employing a relaxed form of sign matching, referred to as path–sign matching, we demonstrate that complete $\mathcal{SS}$ herdability can be achieved within two snapshots for structurally similar temporal digraphs. Future work will explore extending these results to temporal networks without topological constraints, as well as investigating multi-leader $\mathcal{SS}$ herdability.

\addtolength{\textheight}{-12cm}   



\section*{AI Use Declaration}
ChatGPT was used solely to improve the language and readability of the manuscript. 
The author remains fully responsible for the accuracy and integrity of the work.

\bibliography{ifacconf}             
                                                   







\appendix
\end{document}